\theoremstyle{plain}
\numberwithin{equation}{section}
\newtheorem{theorem}{Theorem}[section]
\newtheorem{proposition}[theorem]{Proposition}
\theoremstyle{definition}
\theoremstyle{remark}
\numberwithin{equation}{section}
\newcommand{\cR}{{\mathcal R}}
\newcommand{\cH}{{\mathcal H}}
\newcommand{\cL}{{\mathcal L}}
\newcommand{\cN}{{\mathcal N}}
\newcommand{\cM}{{\mathcal M}}
\newcommand{\cK}{{\mathcal K}}
\newcommand{\ket}[1]{\left\vert #1\right\rangle}
\newcommand{\bra}[1]{\left\langle #1\right\vert}
\newcommand{\Tr}{\mathrm{Tr}}
\newcommand{\supp}{\mathop{\mathrm{supp}}}
\newcommand{\be}{\begin{equation}}
\newcommand{\ee}{\end{equation}}
\newcommand{\bea}{\begin{eqnarray}}
\newcommand{\eea}{\end{eqnarray}}
\newcommand{\beann}{\begin{eqnarray*}}
\newcommand{\eeann}{\end{eqnarray*}}
\begin{document}

\title{Continuity inequalities for sandwiched R\'enyi and Tsallis conditional entropies with application to the channel entropy continuity}
\author{Anna Vershynina}
\affiliation{\small{Department of Mathematics, Philip Guthrie Hoffman Hall, University of Houston, 
3551 Cullen Blvd., Houston, TX 77204-3008, USA}}

\date{\today}

\begin{abstract} For the sandwiched R\'enyi entropy the conditional entropy can be defined two ways: $\tilde{H}^\downarrow_\alpha(A|B)_\rho, \tilde{H}^\uparrow_\alpha(A|B)_\rho$. In the limiting case, $\alpha=1$, both definitions consolidate into conditional entropy $H(A|B)=S(AB)-S(B)$. The continuity inequality for conditional entropy $H(A|B)$, called the Alicki-Fannes-Winter (AWF) inequality, shows that if the states are close in trace-distance, then the conditional entropies are also close. Having the AWF inequality for conditional entropy, we show that the channel entropy defined through the relative entropy is continuous with respect to the diamond-distance between channels. Inspired by this, similar continuity inequalities for the R\'enyi conditional entropy $\tilde{H}^\uparrow_\alpha$ were presented in \cite{marwah2022uniform}. We provide continuity bounds for the sandwiched R\'enyi and Tsallis conditional entropies $\tilde{H}^\downarrow_\alpha(A|B)_\rho, \tilde{T}^\downarrow_\alpha(A|B)_\rho$ for states with the same marginal on the conditioning system. Similar to the previous bounds, our bound depends only on the dimension of the conditioning system. We apply this result to prove continuity of the channel entropy for R\'enyi and Tsallis channel entropies defined through the sandwiched R\'enyi and Tsallis relative entropies.
\end{abstract}
\maketitle

\section{Introduction}

For quantum states a well-known continuity bound of the quantum entropy is given by the Fannes-Audenaert inequality \cite{alicki2004continuity,  audenaert2007sharp,  Fan73, winter2016tight}. The inequality provides an upper bound on the entropy difference in terms of the trace-distance:
$$|S(\rho)-S(\sigma)|\leq T \log [d-1]+s_2(T)\ , $$
where $T=\frac{1}{2}\|\rho-\sigma\|_1$, $d$ is the Hilbert space dimension, and $s_2(p)=-p\log p-(1-p)\log(1-p)$ is the binary entropy. Various proofs and generalization were found afterwards \cite{alicki2004continuity, AE11, audenaert2024continuity, P08, winter2016tight}. These bounds are applied, in particular, to entanglement measures \cite{nielsen2000continuity}, the capacity of quantum channels \cite{LeuS09, Sir17}, and others.

We ask a similar continuity question for the entropy of a channel. The channel entropy is defined through the relative entropy of channels \cite{CMM18}: for two quantum channels $\cN_{A\rightarrow B}$ and $\cM_{A\rightarrow B}$, the relative entropy between them is defined as 
\begin{equation}\label{def:qre-channel}
D(\cN \| \cM)=\sup_{\rho_{AR}} D(\cN\otimes I(\rho)\| \cM\otimes I(\rho))\ .
\end{equation}
Here the supremum is taken over all systems $R$ of any dimension and all states $\rho_{AR}$. However, it is sufficient to consider only pure states $\rho_{AR}$ with system $R$ being isomorphic to system $A$, because of the state purification, the data-processing inequality, and the Schmidt decomposition theorem. This definition was generalized in \cite{LKDW18} by taking any generalized divergence instead of the relative entropy (i.e. satisfying the data processing inequality), leading to a {divergence of channels}. When divergence is a trace-distance \cite{Wat18}, then the divergence of channels is called a {diamond-distance of channels}.

The {entropy of a channel} $\cN$ is then defined as \cite{Y18}
\begin{equation}\label{intro-def:entropy-channel}
S(\cN)=\log|B|-D(\cN\|\cR)\ ,
\end{equation}
here $D$ is the relative entropy of the channels,  $\cR_{A\rightarrow B}(\rho_A)=\Tr(\rho_A)\pi_B$ is a completely randomizing/depolarizing channel, and $\pi_B=I_B/|B|$ is the maximally mixed state. This definition generalizes the static case, when the entropy of a state can be written as $S(\rho)=\log d-D(\rho\| \pi).$

In Section \ref{sec:Rel-entropy} we show that if two quantum channels are close to each other in diamond-distance, then their channel entropy is also close
\begin{equation}\label{intro-eq:channel-entropy-cont}
|S(\cN)-S(\cM)|\leq f(\epsilon,|B|)\ , 
\end{equation}
here $f(\epsilon, |B|)$ is the upper bound discussed below in (\ref{eq:intro}). The proof relies on the fact that the channel entropy $S(\cN)$ can be written in terms of the conditional entropy $H(A|B)$. And the continuity inequality for the channel entropy reduces to a continuity inequality for the conditional entropy. 
For the conditional entropy, $H(A|B)=S(AB)-S(B)$, a continuity inequality was first proved by Alicki and Fannes \cite{alicki2004continuity} and later improved by Winter \cite{winter2016tight}. The inequality is now known as the Alicki-Fannes-Winter (AFW) inequality
\begin{equation}\label{eq:intro}
|H(A|B)_\rho-H(A|B)_\sigma|\leq 2\epsilon \log|A|+(\epsilon+1)\log(\epsilon+1)-\epsilon\log\epsilon=:f(\epsilon, |A|) \ ,
\end{equation}
where the trace-distance $T\leq \epsilon\in[0,1]$. Various analogues and generalizations of this inequality have since been obtained in the literature. Either of these can be used in place of the AFW inequality to derive the upper bound in the continuity of the channel entropy inequality (\ref{intro-eq:channel-entropy-cont}). In particular, the authors in \cite{berta2025continuity} proved that for states $\rho$ and $\sigma$ with the same marginal $\rho_B=\sigma_B$, for $\epsilon$ close to $1$,  it holds that
\begin{equation}\
|H(A|B)_\rho-H(A|B)_\sigma|\leq \epsilon \log(|A|\cdot \text{SN}(\rho_{AB}))+\epsilon\log\epsilon+ (1-\epsilon)\log(1-\epsilon)\ ,
\end{equation}
where $\text{SN}(\rho)$ is the Schmidt number of $\rho$, which is upper bounded by $\min\{|A|, |B|\}$. The equal-marginal setting is precisely the case needed for the continuity of the entropy itself, and therefore is a natural regime for continuity bounds.

Note that the conditional entropy can be written in three equivalent ways
\begin{align}
H(A|B)_\rho&=S(\rho_{AB})-S(\rho_B)\label{eq:intro-conditional-1} \\
&=-D(\rho_{AB}\|I_A\otimes \rho_B)\label{eq:intro-conditional-2}\\
&=-\min_{\sigma_B}D(\rho_{AB}\|I_A\otimes \sigma_B)\ .\label{eq:intro-conditional-3}
\end{align}

For the R\'enyi entropy, it was shown that the generalization of the first expression (\ref{eq:intro-conditional-1}) has severe limitations, as it does not satisfy the data processing inequality \cite{tomamichel2014relating}. The other two expressions, however, give a very useful generalizations, studied in particular in \cite{arimoto1977information, gallager1979source, hayashi2014large, hayashi2013security, marwah2022uniform, muller2013quantum, tomamichel2009fully, tomamichel2014relating, yagi2012finite}. Here, we focus on the last two expressions for the sandwiched R\'enyi relative entropy $\tilde{D}_\alpha$ and the sandwiched Tsallis relative entropy $\tilde{D}^T_\alpha$.

Conditional sandwiched R\'enyi entropies for a state $\rho_{AB}$ are then defined as
\begin{align}
\tilde{H}^\downarrow_\alpha(A|B)_\rho&=-\tilde{D}_\alpha(\rho_{AB}\|I_A\otimes \rho_B)=\frac{1}{1-\alpha}\log \Tr\{(\rho_B^{\frac{1-\alpha}{2\alpha}}\rho\rho_B^{\frac{1-\alpha}{2\alpha}})^\alpha\}\ ,\\
\tilde{H}^\uparrow_\alpha(A|B)_\rho&=-\min_{\sigma_B}\tilde{D}_\alpha(\rho_{AB}\|I_A\otimes \sigma_B)\ .
\end{align}

Continuity inequality for the conditional R\'enyi entropy  $\tilde{H}^\uparrow_\alpha$ was derived in \cite{marwah2022uniform}: if $\frac{1}{2}\|\rho_{AB}-\sigma_{AB}\|_1\leq \epsilon$ for $\epsilon\in[0,1]$ then
\begin{equation}
|\tilde{H}^\uparrow_\alpha(A|B)_\rho- \tilde{H}^\uparrow_\alpha(A|B)_\sigma|\leq f^\uparrow_{\alpha, |A|}(\epsilon)\ ,
\end{equation}
where 
$$ f^\uparrow_{\alpha, d} (\epsilon)=\begin{cases}
\log(1+\epsilon)+\frac{1}{1-\alpha}\log\Bigg(\epsilon^\alpha d^{2(1-\alpha)}+1-\frac{\epsilon}{(1+\epsilon)^{1-\alpha}}\Bigg)\ , \ \ \ \alpha<1\ ,\\
\log(1+\sqrt{2\epsilon})+\frac{1}{1-\beta}\log\Bigg(\sqrt{2\epsilon}^\beta d^{2(1-\beta)}+1-\frac{\sqrt{2\epsilon}}{(1+\sqrt{2\epsilon})^{1-\beta}}\Bigg)\ , \ \ \ \alpha>1 \ .
\end{cases}$$
Here $\alpha^{-1}+\beta^{-1}=2$. The last case, when $\alpha>1$, is derived from the first case, when $\alpha<1$, because of the duality property: for a pure state $\rho_{ABC}$, we have $\tilde{H}^\uparrow_\alpha(A|B)_\rho+\tilde{H}^\uparrow_\beta(A|C)_\rho=0, $
for $\alpha^{-1}+\beta^{-1}=2$. Note that there is no similar duality inequality for $\tilde{H}^\downarrow_\alpha$, so each case for $\alpha$ must be considered separately.

In Section \ref{sec:Renyi-sand}, we prove the continuity of the conditional entropy $\tilde{H}^\downarrow_\alpha$:  if $\frac{1}{2}\|\rho_{AB}-\sigma_{AB}\|_1\leq \epsilon$ for $\epsilon\in[0,1]$ and $\rho_B=\sigma_B$, then 
$$|\tilde{H}_\alpha^\downarrow(A|B)_\rho-\tilde{H}_\alpha^\downarrow(A|B)_\sigma|\leq f_{\alpha, |A|}(\epsilon)\ , $$
where 
$$f_{\alpha,d}(\epsilon)=\begin{cases}
\log(1+\epsilon)+\frac{1}{1-\alpha}\log\Bigg(1+\epsilon^\alpha d^{2(1-\alpha)}\Bigg)\ , \ \ \ \alpha\in[\frac{1}{2}, 1)\\
\frac{\alpha}{\alpha-1}\log(1+\epsilon)\ , \ \ \ \alpha>1 \ .
\end{cases}
$$
Note that for every fixed $\alpha$, the $f_{\alpha,d}(\epsilon)\rightarrow 0$ as $\epsilon\rightarrow 0$.

Using this inequality, we prove the continuity of the R\'enyi channel entropy defined through the sandwiched R\'enyi relative entropy: if two quantum channels are close in diamond-distance, $\frac{1}{2}\|\cN-\cM\|_\diamond\leq\epsilon$, then the R\'enyi channel entropy for $\alpha\in[\frac{1}{2}, 1)\cup(1,\infty)$ is also close
$$|\tilde{S}_\alpha(\cN)-\tilde{S}_\alpha(\cM)|\leq f_{\alpha,|B|}(\epsilon)\ .$$

In Section \ref{sec:Tsallis-sand}, we provide similar continuity inequalities for the sandwiched Tsallis conditional entropy and the Tsallis channel entropy defined in terms of the sandwiched Tsallis relative entropy for $\alpha\in[\frac{1}{2},1)\cup(1,2)$. The sandwiched Tsallis relative entropy is defined as
$$
\tilde{D}^T_\alpha(\rho\|\sigma)=\frac{1}{\alpha-1}\Bigg( \Tr\{(\sigma^{\frac{1-\alpha}{2\alpha}}\rho\sigma^{\frac{1-\alpha}{2\alpha}})^\alpha\}-1\Bigg)\ ,
$$
for $\alpha<1$, or $\alpha>1$ and $\supp\rho\subseteq\supp\sigma$.

Conditional sandwiched Tsallis entropies for a state $\rho_{AB}$ are defined as
\begin{align}
\tilde{T}^\downarrow_\alpha(A|B)_\rho&=-\tilde{D}^T_\alpha(\rho_{AB}\|I_A\otimes \rho_B)=\frac{1}{1-\alpha}\Bigg( \Tr\{(\rho_B^{\frac{1-\alpha}{2\alpha}}\rho\rho_B^{\frac{1-\alpha}{2\alpha}})^\alpha\}-1\Bigg)\ ,\\
\tilde{T}^\uparrow_\alpha(A|B)_\rho&=-\min_{\sigma_B}\tilde{D}^T_\alpha(\rho_{AB}\|I_A\otimes \sigma_B)\ .
\end{align}

Similarly to the sandwiched R\'enyi conditional entropy we have the following continuity inequality. For states with the same marginals $\rho_B=\sigma_B$ such that $\frac{1}{2}\|\rho-\sigma\|_1=\epsilon\in[0,1]$, we have
$$
|\tilde{T}_\alpha^\downarrow(A|B)_\rho-\tilde{T}_\alpha^\downarrow(A|B)_\sigma|\leq  f^T_{\alpha,|A|}(\epsilon)\ ,
$$
where
$$f^T_{\alpha,d}(\epsilon)=\begin{cases}
\frac{1}{1-\alpha}((1+\epsilon^\alpha )(1+\epsilon)^{1-\alpha}-1)d^{1-\alpha}\ , \ \ \ \alpha\in[\frac{1}{2}, 1)\\
\frac{1}{\alpha-1} \Bigg[\Bigg((1+\epsilon)^{\alpha-1}-1\Bigg)d^{\alpha-1}+\epsilon(1+\epsilon)^{\alpha-1}d^{1-\alpha}\Bigg]\ , \ \ \ \alpha\in(1,2)\ .
\end{cases}
$$
Note that for every fixed $\alpha$, the $f^T_{\alpha,d}(\epsilon)\rightarrow 0$ as $\epsilon\rightarrow 0$.

We define the $\alpha$-Tsallis channel entropy as follows
$$\tilde{S}^T_\alpha(\cN)=\frac{|B|^{1-\alpha}-1}{1-\alpha}-|B|^{1-\alpha}\tilde{D}^T_\alpha(\cN\|\cR)\ .
$$
This definition is different from the form (\ref{intro-def:entropy-channel}) for the relative entropy and the R\'enyi relative entropies. It is because the Tsallis relative entropy has a different scaling: $\tilde{D}^T_\alpha(\rho\|c\sigma)=\frac{c^{1-\alpha}-1}{\alpha-1}+c^{1-\alpha}\tilde{D}^T_\alpha(\rho\|\sigma)$, 
 resulting in a different form for the entropy of a state in terms of the relative entropy.

This Tsallis channel entropy is monotone under uniformity preserving superchannels, normalized, bounded (Theorem \ref{thm:Tsallis-bound}), and pseudo-additive. Note that to show boundedness, we used the conditional Tsallis entropy ${T}_\alpha^\downarrow$ defined through the Tsallis relative entropy (non-sandwiched). In Theorem \ref{thm:Tsallis-additive}, we show that the pseudo-additivity takes the form
$$\tilde{S}_\alpha^T(\cN\otimes\cM)=\tilde{S}_\alpha^T(\cN)+\tilde{S}_\alpha^T(\cM)+(1-\alpha)\tilde{S}_\alpha^T(\cN)\tilde{S}_\alpha^T(\cM)\ .$$

In Theorem \ref{thm:Tsallis-continuity-entropy}, we show the continuity of the Tsallis channel entropy: for two quantum channels close to each other in diamond-distance, $\frac{1}{2}\|\cN-\cM\|_\diamond\leq\epsilon$, we have
$$|\tilde{S}^T_\alpha(\cN)-\tilde{S}^T_\alpha(\cM)|\leq f^T_{\alpha, |B|}(\epsilon)\ . $$

\section{Preliminaries}
\subsection{Definitions}

In this paper all Hilbert spaces are finite dimensional. We denote them as $\cH_A, \cH_B, \dots$, where subscripts indicate the corresponding system. A Hilbert space corresponding to multiple systems, e.g. AB, is a tensor product of individual subsystems, e.g. $\cH_A\otimes\cH_B$.  For a Hilbert space $\cH_A$, its dimension is denoted as $|A|:=\dim\cH_A$.  For a Hilbert space $\cH$, we denote $\cL(\cH)$ the space of all  linear operators on $\cH$.

A quantum state or a density operator $\rho_A\in\cL(\cH_A)$ on a Hilbert space $\cH_A$ is a positive semidefinite, trace-normalized operator, i.e. $\rho_A\geq 0$ and $\Tr\,\rho_A=1$. A state is pure if it is rank-one. A pure state $\psi_A$ has an associated  vector $\ket{\psi}_A\in\cH_A$ such that $\langle{\psi}|\psi\rangle=1$ and $\psi_A=\ket{\psi}\bra{\psi}_A$. 
 
A quantum channel $\cN:A\rightarrow B$ is a linear completely-positive trace-preserving (CPTP) map from $\cL(\cH_A)$ to $\cL(\cH_B)$. The channel can also be denoted as $\cN_{A\rightarrow B}$. The identity channel on system $A$ is denoted as $I_A$. The subscript in the channels is dropped if it is evident which systems are involved. 

Completely depolarizing/randomizing channel is defined as
$$\cR_{A\rightarrow B}(\rho_A)=\Tr(\rho_A)\pi_B\ , $$
where $\pi_B=I_B/|B|$ is the maximally mixed state.

A {superchannel} \cite{ChDAP08} $\Lambda$ transforms a quantum channel $\cN_{A\rightarrow B} $ to a channel from $C$ to $D$ as follows
\begin{equation}\label{def:superch}
\Lambda(\cN_{A\rightarrow B})_{C\rightarrow D}=\cM_{BE\rightarrow D}\circ(\cN_{A\rightarrow B}\otimes I_E)\circ \cK_{C\rightarrow AE}\ , 
\end{equation}
with the ancillary system $E$, and channels $\cM_{BE\rightarrow D}$ and $\cK_{C\rightarrow AE}$.

A uniformity preserving superchannel $\Lambda$ is the one sending a completely randomizing channel to a completely randomizing one, i.e. $\Lambda(\cR_{A\rightarrow B})=\cR_{C\rightarrow D}$ with $|A|=|C|$ and $|B|=|D|$.

The quantum (Umegaki) relative entropy \cite{U62} is defined as $D(\rho\|\sigma)=\Tr(\rho\log\rho-\rho\log\sigma)$ when $\supp\rho\subseteq\supp\sigma$, and $+\infty$ otherwise. The quantum entropy of a state $\rho$ is defined as $S(\rho)=-\Tr(\rho\log\rho)$. The quantum conditional entropy can be defined any one of the following three ways:
\begin{align}
H(A|B)_\rho&=S(\rho_{AB})-S(\rho_B) \\
&=-D(\rho_{AB}\|I_A\otimes \rho_B)\\
&=-\min_{\sigma_B}D(\rho_{AB}\|I_A\otimes \sigma_B)\ .
\end{align}

For two quantum channels $\cN_{A\rightarrow B}$ and $\cM_{A\rightarrow B}$, the relative entropy of channels \cite{CMM18}  is defined as
\begin{equation}
D(\cN\|\cM)=\sup_{\rho_{AR}} D(\cN\otimes I(\rho)\| \cM\otimes I(\rho))\ .
\end{equation}
Here the supremum is taken over all systems $R$ of any dimension and all states $\rho_{AR}$. However, it is sufficient to consider only pure states $\rho_{AR}$ with system $R$ being isomorphic to system $A$, because of the state purification, the data-processing inequality, and the Schmidt decomposition theorem. 

Taking other divergences instead of the relative entropy $D(\cdot \| \cdot)$ above, results in various relative entropies of channels, e.g. R\'enyi and Tsallis relative entropies.

Trace-norm of a linear map $X$ is defined as $\|X\|_1=\Tr\sqrt{X^*X}$. Then the trace-distance between two states $\rho$ and $\sigma$ is defined as $\|\rho-\sigma\|_1=\Tr|\rho-\sigma|$. Sometimes a factor of $\frac{1}{2}$ is added in the definition of a trace-distance. Will use these notions interchangeably, as it will be clear whether or not a factor is present or it makes no difference.

The {trace-distance of quantum channels} $\cN, \cM:A \rightarrow B$ is defined as 
\begin{equation}\label{def:trace-norm-channel}
\|\cN -\cM\|_1=\sup_{\rho_A} \|\cN(\rho) - \cM (\rho)\|_1\ .
\end{equation}

The {diamond-distance of channels} $\cN, \cM:A \rightarrow B$ is defined as 
\begin{equation}\label{def:qre-channel}
\|\cN - \cM\|_\diamond=\sup_{\rho_{AR}} \|\cN_{A\rightarrow B}\otimes I_R(\rho)- \cM_{A\rightarrow B}\otimes I_R(\rho)\|_1\ ,
\end{equation}
where, similarly to the above, it is enough to consider $\dim R=\dim A$ and only pure states in the maximization.

\subsection{Relative entropy}

The {\bf entropy of a quantum channel} \cite{Y18} $\cN_{A\rightarrow B}$ is defined as
\begin{equation}\label{def:entropy-channel}
S(\cN)=\log|B|-D(\cN\|\cR)\ ,
\end{equation}
where $D(\cdot\|\cdot)$ is based on the Umegaki relative entropy $D(\rho\|\sigma)=\Tr(\rho\log\rho-\rho\log\sigma).$

Note that
$$D(\cN\|\cR)=\sup_\psi D(\cN_{A\rightarrow B}\otimes I_R \ket{\psi}\bra{\psi}_{AR} \|\cR_{A\rightarrow B}\otimes I_R\ket{\psi}\bra{\psi}_{AR})=\sup_\psi D(\rho_{BR}\|\pi_B\otimes\rho_R)\ ,
 $$
 where $\rho_{BR}=\cN\otimes I\ket{\psi}\bra{\psi}$.

The entropy of a quantum channel has the following properties \cite{G19, GW21, Y18}:
\begin{itemize}
\item (Monotonicity) Since the divergence is monotone under quantum channels, the generalized channel entropy is monotone under uniformity preserving superchannels: For any uniformity preserving superchannel $\Lambda$ (i.e. sending a completely randomizing channel to a completely randomizing one, $\Lambda(\cR_{A\rightarrow B})=\cR_{C\rightarrow D}$ with $|A|=|C|$ and $|B|=|D|$), we have
$$S(\Lambda(\cN))\geq S(\cN)\ .$$

\item (Normalization) By definition, the entropy of a completely randomizing channel $\cR$ is $S(\cR)=\log|B|$.

Let $\Phi_\sigma(\rho_A)=\sigma_B$ be a replacer channel for some fixed state $\sigma$. Then
\begin{align}
D(\Phi_\sigma\otimes I \ket{\psi}\bra{\psi}_{AR} \|\cR\otimes I\ket{\psi}\bra{\psi}_{AR})&=D(\sigma_B\otimes \psi_R\|\pi_B\otimes\psi_R)\\
&=D(\sigma\|\pi)\ .
\end{align}
Here we used the stability property of the divergence, which is a consequence of the monotonicity property. 

Therefore, the entropy of the replacer channel is
\begin{align}
S(\Phi_\sigma)&=\log|B|-D(\cN\|\cR)\\
&=\log|B|-D(\sigma\|\pi)\\
&=S(\sigma)\ .
\end{align}

And for a replacer channel that replaces any state with a pure state, the entropy of this channel is zero,  i.e. $S(\Phi_\phi)=0$ for $\Phi_\phi(\rho)=\ket{\phi}\bra{\phi}$ for some fixed pure state $\ket{\phi}$.

\item (Additivity) For any two quantum channels, $S(\cN\otimes \cM)=S(\cN)+S(\cM)$.
\item (Boundedness) The entropy of a channel could be negative, but it is bounded, $|S(\cN)|\leq \log|B|$. The lowest value is achieved for the identity channel, and the highest value is achieved for a completely randomizing channel. 
\end{itemize}

\section{Continuity Inequalities}

\subsection{Relative entropy}\label{sec:Rel-entropy}

The channel entropy can be written as a infimum of a conditional entropy:
\begin{align}
S(\cN)&=\log|B|-D(\cN\|\cR)\\
&=\log|B|-\sup_{\psi} D(\cN_{A\rightarrow B}\otimes I_R\ket{\psi}\bra{\psi}_{AR}\|\cR_{A\rightarrow B}\otimes I_R \ket{\psi}\bra{\psi}_{AR})\\
&=\log|B|-\sup_{\psi} D(\cN\otimes I\ket{\psi}\bra{\psi}\|\pi_B\otimes \psi_R)\\
&=-\sup_{\psi} D(\cN\otimes I\ket{\psi}\bra{\psi}\|I_B\otimes \psi_R)\\
&=\inf_{\psi} H(B|R)_{\cN\otimes I \ket{\psi}\bra{\psi}}.
\end{align}
Here $\pi_B=I_B/|B|$ and $ \psi_R=\Tr_A\ket{\psi}\bra{\psi}$. We used $D(\rho\|c\sigma)=D(\rho\|\sigma)-\log c$ and $H(B|R)_\rho=-D(\rho_{BR}\|I_B\otimes \rho_R)$.

Recall the AFW continuity inequality for the conditional entropy \cite{alicki2004continuity, winter2016tight}: Let $\rho_{AB}, \sigma_{AB}$ be two states such that $\frac{1}{2}\|\rho-\sigma\|_1\leq \epsilon$ for $\epsilon\in[0,1]$. Then 
\begin{equation}
|H(A|B)_\rho-H(A|B)_\sigma|\leq 2\epsilon \log|A|+(\epsilon+1)\log(\epsilon+1)-\epsilon\log\epsilon=:f(\epsilon, |A|) \ .
\end{equation}
One may use any other valid upper bound above, which will appear in the continuity theorem below.

\begin{theorem}\label{thm:cont} (Continuity of channel entropy) Let $\cN$ and $\cM$ be two channels from $A$ to $B$ such that $\frac{1}{2}\|\cN-\cM\|_\diamond\leq\epsilon$. Then
$$|S(\cN)-S(\cM)|\leq f(\epsilon,|B|)\ , $$
where $f(\epsilon,|B|)=2\epsilon \log|B|+(\epsilon+1)\log(\epsilon+1)-\epsilon\log\epsilon$.
\end{theorem}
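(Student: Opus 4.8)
The plan is to leverage the variational identity $S(\cN)=\inf_\psi H(B|R)_{\cN\otimes I\ket{\psi}\bra{\psi}}$ established just above, which recasts the channel entropy as an infimum of ordinary conditional entropies over pure test states $\ket{\psi}_{AR}$ with $\dim R=\dim A$. The whole difficulty then reduces to controlling this conditional entropy uniformly in $\psi$, for which the AFW inequality is exactly the right tool.

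First I would fix an arbitrary pure state $\ket{\psi}_{AR}$ and set $\rho_{BR}=\cN_{A\rightarrow B}\otimes I_R(\ket{\psi}\bra{\psi})$ and $\sigma_{BR}=\cM_{A\rightarrow B}\otimes I_R(\ket{\psi}\bra{\psi})$. By the definition of the diamond distance (with the same equal-dimension reduction $R\cong A$ that appears in the channel-entropy characterization), these outputs obey $\frac{1}{2}\|\rho_{BR}-\sigma_{BR}\|_1\leq\frac{1}{2}\|\cN-\cM\|_\diamond\leq\epsilon$. Applying the AFW inequality to the conditional entropy $H(B|R)$ — where the conditioned system is now $B$, so that the dimension appearing in the bound is $|B|$ — gives $|H(B|R)_\rho-H(B|R)_\sigma|\leq f(\epsilon,|B|)$, and crucially this bound is uniform over the choice of $\psi$.

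It remains to pass from this pointwise bound to a bound on the infima. Writing $g_\cN(\psi)=H(B|R)_{\cN\otimes I\ket{\psi}\bra{\psi}}$ and likewise $g_\cM$, the inequality $g_\cN(\psi)\leq g_\cM(\psi)+f(\epsilon,|B|)$ holds for every $\psi$; taking the infimum of both sides yields $\inf_\psi g_\cN(\psi)\leq\inf_\psi g_\cM(\psi)+f(\epsilon,|B|)$, i.e. $S(\cN)\leq S(\cM)+f(\epsilon,|B|)$. Exchanging the roles of $\cN$ and $\cM$ gives the reverse inequality, and together these prove $|S(\cN)-S(\cM)|\leq f(\epsilon,|B|)$.

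I expect the only real obstacle to be the legitimacy of interchanging the estimate with the infimum: this is precisely why uniformity of the AFW bound in $\psi$ is indispensable, and it is also where one must verify that the equal-dimension reductions of both the diamond distance and the channel-entropy characterization are compatible, so that a single family of test states serves both. No attainment of the infimum is needed — the difference-of-infima estimate is valid for infima directly — so beyond this bookkeeping the argument is routine.
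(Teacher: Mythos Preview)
Your proposal is correct and follows essentially the same route as the paper: both arguments invoke the variational identity $S(\cN)=\inf_\psi H(B|R)_{\cN\otimes I\ket{\psi}\bra{\psi}}$ and then apply the AFW inequality, with the diamond-norm hypothesis controlling $\frac{1}{2}\|\cN\otimes I(\psi)-\cM\otimes I(\psi)\|_1$ for every test state. The only cosmetic difference is that the paper picks a near-minimizer for one channel and uses a $\delta\to 0$ argument, whereas you pass the uniform pointwise bound directly through the infimum; these are equivalent ways of handling the same step.
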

\begin{proof}
WLOG suppose that $S(\cN)\geq S(\cM)$.  

Since $S(\cM)=\inf_\psi H(B|R)_{\cM\otimes I \ket{\psi}\bra{\psi}}$, for any $\delta>0$, there exists a state $\omega$ such that
$$H(B|R)_{\cM\otimes I \ket{\omega}\bra{\omega}}<S(\cM)+\delta \ . $$
Then, since $S(\cN)=\inf_\psi H(B|R)_{\cM\otimes I \ket{\psi}\bra{\psi}}\leq H(B|R)_{\cN\otimes I\ket{\omega}\bra{\omega}}$, we have
\begin{align}
S(\cN)-S(\cM)&< H(B|R)_{\cN\otimes I\ket{\omega}\bra{\omega}}- H(B|R)_{\cM\otimes I \ket{\omega}\bra{\omega}}+\delta\\
&\leq f(\epsilon,|B|)+\delta\ ,
\end{align}
since $\|\cN\otimes I\ket{\omega}\bra{\omega}-\cM\otimes I\ket{\omega}\bra{\omega}\|_1\leq\|\cN-\cM\|_\diamond=\sup_\rho \|\cN\otimes I(\rho)-\cM\otimes I(\rho) \|_1\leq 2\epsilon$ and the AWF inequality.

Taking $\delta\rightarrow 0$, we obtain the necessary continuity inequality.

\end{proof}

\subsection{Sandwiched R\'enyi relative entropy}\label{sec:Renyi-sand}

R\'enyi  entropy is defined as, for $\alpha>0$ and $\alpha\neq 1$,
\begin{equation}
S_\alpha(\rho)=\frac{1}{1-\alpha}\log \Tr\{\rho^\alpha\}\ .
\end{equation}

The sandwiched R\'enyi relative entropy is defined as
\begin{equation}
\tilde{D}_\alpha(\rho\|\sigma)=\frac{1}{\alpha-1}\log \Tr\{(\sigma^{\frac{1-\alpha}{2\alpha}}\rho\sigma^{\frac{1-\alpha}{2\alpha}})^\alpha\}\ ,
\end{equation}
for $\alpha<1$, or $\alpha>1$ and $\supp\rho\subseteq\supp\sigma$.

Note that the sandwiched R\'enyi relative entropy obeys the data processing inequality for $\alpha\in[\frac{1}{2}, 1)\cup (1,\infty)$ and is jointly convex for $\alpha\in[\frac{1}{2}, 1)$. Moreover, the functional $(\rho,\sigma)\mapsto \Tr\{(\sigma^{\frac{1-\alpha}{2\alpha}}\rho\sigma^{\frac{1-\alpha}{2\alpha}})^\alpha\}$ is jointly concave for $\alpha\in[\frac{1}{2}, 1)$ and jointly convex for $\alpha>1$, see \cite{frank2013monotonicity}.

Conditional sandwiched R\'enyi entropies for a state $\rho_{AB}$ are defined as
\begin{align}
\tilde{H}^\downarrow_\alpha(A|B)_\rho&=-\tilde{D}_\alpha(\rho_{AB}\|I_A\otimes \rho_B)=\frac{1}{1-\alpha}\log \Tr\{(\rho_B^{\frac{1-\alpha}{2\alpha}}\rho\rho_B^{\frac{1-\alpha}{2\alpha}})^\alpha\}\ ,\\
\tilde{H}^\uparrow_\alpha(A|B)_\rho&=-\min_{\sigma_B}\tilde{D}_\alpha(\rho_{AB}\|I_A\otimes \sigma_B)\ .
\end{align}

A number of inequalities relating and bounding these and other conditional entropies was presented in a unified form in \cite{zhu2017coherence}, some original and some from other sources \cite{beigi2013sandwiched, hayashi2017quantum, leditzky2017data, muller2013quantum, tomamichel2014relating}. In particular, we will be using the boundedness of both conditional entropies: for any state $\rho_{AB}$,
\begin{align}
|\tilde{H}^\downarrow_\alpha(A|B)_\rho|&\leq \log|A|\ ,\label{eq:down-log} \\
|\tilde{H}^\uparrow_\alpha(A|B)_\rho|&\leq \log|A|\ .\label{eq:up-log}
\end{align}

\begin{theorem}\label{thm:sand-R-cond}
Let $\alpha\in[\frac{1}{2}, 1)$. Suppose that the states $\rho_{AB}$ and $\sigma_{AB}$ have the same marginals $\rho_B=\sigma_B$ and they are close to each other in trace-distance $\frac{1}{2}\|\rho-\sigma\|_1=\epsilon\in[0,1]$. Then
\begin{equation}
|\tilde{H}_\alpha^\downarrow(A|B)_\rho-\tilde{H}_\alpha^\downarrow(A|B)_\sigma|\leq \log(1+\epsilon)+\frac{1}{1-\alpha}\log\Bigg(1+\epsilon^\alpha |A|^{2(1-\alpha)}\Bigg)\ .
\end{equation}
\end{theorem}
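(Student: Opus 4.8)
The plan is to reduce everything to the single non-logarithmic functional
$$Q_\alpha(X):=\Tr\{(KXK)^\alpha\},\qquad K:=I_A\otimes\rho_B^{\frac{1-\alpha}{2\alpha}},$$
and to exploit the decisive feature of the equal-marginal hypothesis: since $\rho_B=\sigma_B$, the \emph{same} operator $K$ weights both entropies, so that $\tilde H^\downarrow_\alpha(A|B)_\rho=\frac{1}{1-\alpha}\log Q_\alpha(\rho)$ and $\tilde H^\downarrow_\alpha(A|B)_\sigma=\frac{1}{1-\alpha}\log Q_\alpha(\sigma)$ with one and the same $K$. Because $\frac{1}{1-\alpha}>0$ and the hypothesis is symmetric in $\rho,\sigma$, it suffices to assume $Q_\alpha(\rho)\ge Q_\alpha(\sigma)$ and to bound $\frac{1}{1-\alpha}\log\frac{Q_\alpha(\rho)}{Q_\alpha(\sigma)}$ from above.

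First I would take the Jordan decomposition $\rho-\sigma=\Delta_+-\Delta_-$ with $\Delta_\pm\ge 0$ of orthogonal support and $\Tr\Delta_+=\Tr\Delta_-=\epsilon$. Then $\sigma+\Delta_+-\rho=\Delta_-\ge 0$, i.e. $\rho\le\sigma+\Delta_+$ as operators. Conjugation by $K$ preserves the order, giving $K\rho K\le K\sigma K+K\Delta_+K$; applying the operator monotonicity of $t\mapsto t^\alpha$ (valid for $\alpha\in[\tfrac12,1)$) followed by the trace subadditivity $\Tr(X+Y)^\alpha\le\Tr X^\alpha+\Tr Y^\alpha$ for positive $X,Y$ and $\alpha\in(0,1)$ yields the key estimate
$$Q_\alpha(\rho)\le Q_\alpha(\sigma)+Q_\alpha(\Delta_+).$$

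It then remains to bound the error term $Q_\alpha(\Delta_+)$ from above and $Q_\alpha(\sigma)$ from below. For the latter, the boundedness \eqref{eq:down-log} applied to the genuine state $\sigma$ gives $Q_\alpha(\sigma)=\exp\big((1-\alpha)\tilde H^\downarrow_\alpha(A|B)_\sigma\big)\ge|A|^{-(1-\alpha)}$. The former is the only delicate point: writing $\Delta_+=\epsilon\,\hat\Delta_+$ with $\hat\Delta_+$ a normalized state and using $\Tr(cX)^\alpha=c^\alpha\Tr X^\alpha$, one has $Q_\alpha(\Delta_+)=\epsilon^\alpha Q_\alpha(\hat\Delta_+)$. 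Here $Q_\alpha(\hat\Delta_+)$ is evaluated with the ``wrong'' weight $K=I_A\otimes\rho_B^{\frac{1-\alpha}{2\alpha}}$ rather than with $\hat\Delta_+$'s own marginal, so it equals $\exp\big((1-\alpha)(-\tilde D_\alpha(\hat\Delta_+\|I_A\otimes\rho_B))\big)$. Since $\rho_B$ is merely one admissible conditioning state, the optimized definition gives $-\tilde D_\alpha(\hat\Delta_+\|I_A\otimes\rho_B)\le\tilde H^\uparrow_\alpha(A|B)_{\hat\Delta_+}\le\log|A|$ by \eqref{eq:up-log}, whence $Q_\alpha(\hat\Delta_+)\le|A|^{1-\alpha}$ and $Q_\alpha(\Delta_+)\le\epsilon^\alpha|A|^{1-\alpha}$.

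Combining the three estimates gives $\frac{Q_\alpha(\rho)}{Q_\alpha(\sigma)}\le 1+\frac{\epsilon^\alpha|A|^{1-\alpha}}{|A|^{-(1-\alpha)}}=1+\epsilon^\alpha|A|^{2(1-\alpha)}$, and applying $\frac{1}{1-\alpha}\log(\cdot)$ produces $\tilde H^\downarrow_\alpha(A|B)_\rho-\tilde H^\downarrow_\alpha(A|B)_\sigma\le\frac{1}{1-\alpha}\log\big(1+\epsilon^\alpha|A|^{2(1-\alpha)}\big)$, which is dominated by the claimed bound (the additional $\log(1+\epsilon)$ being nonnegative slack). I expect the main obstacle to be precisely the mismatched-marginal term $Q_\alpha(\Delta_+)$: taming it is exactly what forces the appearance of the optimized entropy $\tilde H^\uparrow_\alpha$ and its boundedness \eqref{eq:up-log}, and it is also where the exponent $|A|^{2(1-\alpha)}$ (rather than $|A|^{1-\alpha}$) arises, namely as the product of the upper bound on $Q_\alpha(\Delta_+)$ and the matching lower bound on $Q_\alpha(\sigma)$. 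The remaining ingredients—operator monotonicity of $t^\alpha$ and trace subadditivity—are standard facts for $\alpha\in[\tfrac12,1)$.
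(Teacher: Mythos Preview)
Your argument is correct and in fact slightly sharper than the paper's, but it proceeds along a genuinely different route. The paper introduces the auxiliary state $\Delta=\frac{1}{1+\epsilon}\rho+\frac{\epsilon}{1+\epsilon}Q=\frac{1}{1+\epsilon}\sigma+\frac{\epsilon}{1+\epsilon}P$ and sandwiches $\Tr\{(\omega^\gamma\Delta\omega^\gamma)^\alpha\}$ between an upper bound coming from McCarthy's inequality applied to the $\rho$-decomposition and a lower bound coming from \emph{concavity} of the trace functional applied to the $\sigma$-decomposition; the mismatch between the exponents $(1+\epsilon)^{-\alpha}$ and $(1+\epsilon)^{-1}$ arising from these two steps is precisely what produces the additive $\log(1+\epsilon)$ in the paper's final bound. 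You bypass the intermediate state entirely: the operator inequality $\rho\le\sigma+\Delta_+$ (conjugated by $K$) together with trace monotonicity and Rotfel'd subadditivity yields $Q_\alpha(\rho)\le Q_\alpha(\sigma)+Q_\alpha(\Delta_+)$ in one stroke, with no $(1+\epsilon)$ prefactors. The remaining ingredients---bounding $Q_\alpha(\Delta_+)$ via $\tilde H^\uparrow_\alpha$ and \eqref{eq:up-log}, and bounding $Q_\alpha(\sigma)$ below via \eqref{eq:down-log}---are the same in both proofs. Your route is more elementary (no concavity needed) and delivers the stated bound without the $\log(1+\epsilon)$ slack. One cosmetic remark: operator monotonicity of $t\mapsto t^\alpha$ holds on all of $(0,1]$, not only on $[\tfrac12,1)$; the restriction $\alpha\ge\tfrac12$ enters only through the validity of the dimensional bounds \eqref{eq:down-log}--\eqref{eq:up-log}.
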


\begin{proof}
When $\epsilon=0$, the bound is trivial. Then suppose that $\epsilon>0$. 

Denote $\omega_{AB}=I_A\otimes \rho_B=I_A\otimes\sigma_B$ and  $\gamma=\frac{1-\alpha}{2\alpha}$. Then for $\delta=\rho$ or $\delta=\sigma$ we have $\tilde{H}^\downarrow_\alpha(A|B)_\delta=-\tilde{D}_\alpha(\delta_{AB}\|\omega_{AB})=\frac{1}{1-\alpha}\log\Tr\{ (\omega^\gamma\delta\omega^\gamma)^\alpha\}$. Therefore,
\begin{align}
\Tr\{ (\omega^\gamma\rho\omega^\gamma)^\alpha\}&=2^{(1-\alpha)\tilde{H}^\downarrow_\alpha(A|B)_\rho}\ ,\label{eq:max-exp-rho}\\
\Tr\{ (\omega^\gamma\sigma\omega^\gamma)^\alpha\}&=2^{(1-\alpha)\tilde{H}^\downarrow_\alpha(A|B)_\sigma}\ . \label{eq:max-exp-sigma}
\end{align}

Moreover, for any state $\delta_{AB}$,
\begin{align}
\tilde{H}^\uparrow_\alpha(A|B)_\delta&=-\min_{\xi_B}\tilde{D}_\alpha(\delta_{AB}\|I_A\otimes \xi_B)\\
&=\max_{\xi_B}\frac{1}{1-\alpha}\log \Tr\{(\xi_B^{\frac{1-\alpha}{2\alpha}}\delta\xi_B^{\frac{1-\alpha}{2\alpha}})^\alpha\}\\
&=\frac{1}{1-\alpha}\log\max_{\xi_B} \Tr\{(\xi_B^{\frac{1-\alpha}{2\alpha}}\delta\xi_B^{\frac{1-\alpha}{2\alpha}})^\alpha\}\ .
\end{align}
Therefore, for any state $\delta_{AB}$,
\begin{equation}
\max_{\xi_B}\Tr\{(\xi_B^{\frac{1-\alpha}{2\alpha}}\delta\xi_B^{\frac{1-\alpha}{2\alpha}})^\alpha\}=2^{(1-\alpha)\tilde{H}^\uparrow_\alpha(A|B)_\delta}\ .
\end{equation}
And, in particular, 
\begin{equation}\label{eq:up-exp}
\Tr\{ (\omega^\gamma\delta\omega^\gamma)^\alpha\}\leq 2^{(1-\alpha)\tilde{H}^\uparrow_\alpha(A|B)_\delta}\ .
\end{equation}

 Let us decompose $\rho-\sigma=P'-Q'$ into positive $P'\geq0$ and negative $Q'\geq 0$ commuting parts. Then $\Tr P'=\Tr Q'=\epsilon$. Denote $P=P'/\epsilon$ and $Q=Q'/\epsilon$. Then $P, Q$ are density operators.

Denote 
\begin{equation}\label{eq:Delta}
\Delta_{AB}:=\frac{1}{1+\epsilon}\rho+\frac{\epsilon}{1+\epsilon}Q=\frac{1}{1+\epsilon}\sigma+\frac{\epsilon}{1+\epsilon}P\ .
\end{equation}
Recall McCarthy's inequality \cite{mccarthycp} or Rotfel'd inequality \cite{Rotfeld1969}: for $X, Y \geq 0$ and $\alpha\in[0,1]$, we have
\begin{equation}
\Tr \{(X+Y)^\alpha\}\leq \Tr\{X^\alpha\}+\Tr\{Y^\alpha\}\ .
\end{equation}
Taking $X=\frac{1}{1+\epsilon}\omega^\gamma\rho\omega^\gamma$ and $Y=\frac{\epsilon}{1+\epsilon}\omega^\gamma Q\omega^\gamma$ in the McCarthy's inequality, we have
\begin{align}
\Tr\{(\omega^\gamma\Delta\omega^\gamma)^\alpha\}&\leq \frac{1}{(1+\epsilon)^\alpha}\Tr\{(\omega^\gamma\rho\omega^\gamma)^\alpha\}+\frac{\epsilon^\alpha}{(1+\epsilon)^\alpha}\Tr\{(\omega^\gamma Q\omega^\gamma)^\alpha\}\label{eq:in-question}\\
&= \frac{1}{(1+\epsilon)^\alpha}2^{(1-\alpha)\tilde{H}_\alpha^\downarrow(A|B)_\rho}+\frac{\epsilon^\alpha}{(1+\epsilon)^\alpha}\Tr\{(\omega^\gamma Q\omega^\gamma)^\alpha\}\\
&\leq \frac{1}{(1+\epsilon)^\alpha}2^{(1-\alpha)\tilde{H}_\alpha^\downarrow(A|B)_\rho}+\frac{\epsilon^\alpha}{(1+\epsilon)^\alpha}2^{(1-\alpha)\tilde{H}_\alpha^\uparrow(A|B)_Q}\\
&\leq \frac{1}{(1+\epsilon)^\alpha}2^{(1-\alpha)\tilde{H}_\alpha^\downarrow(A|B)_\rho}+\frac{\epsilon^\alpha}{(1+\epsilon)^\alpha}|A|^{1-\alpha}\ .\label{eq:upper}
\end{align}
Here the first inequality follows from McCarthy's inequality since $\omega^\gamma\Delta\omega^\gamma=X+Y$. The first equality follows from (\ref{eq:max-exp-rho}). Second inequality follows (\ref{eq:up-exp}). The third inequality follows from the upper bound (\ref{eq:up-log}). Note that (\ref{eq:in-question}) was proved in \cite{marwah2022uniform}, but was applied to different states while proving the continuity inequality for $\tilde{H}^\uparrow_\alpha$.

On the other hand, since the trace functional $\Delta_{AB}\mapsto\Tr\{(\omega^\gamma\Delta_{AB}\omega^\gamma)^\alpha\}$ is concave for $\frac{1}{2}\leq\alpha<1$ \cite{frank2013monotonicity}, we have
\begin{align}
\Tr\{(\omega^\gamma\Delta\omega^\gamma)^\alpha\}&\geq \frac{1}{1+\epsilon}\Tr\{(\omega^\gamma\sigma\omega^\gamma)^\alpha\}+\frac{\epsilon}{1+\epsilon}\Tr\{(\omega^\gamma P \omega^\gamma)^\alpha\}\\
&\geq \frac{1}{1+\epsilon} 2^{(1-\alpha)\tilde{H}_\alpha^\downarrow(A|B)_\sigma}\ .\label{eq:lower}
\end{align}
The second inequality holds from (\ref{eq:max-exp-sigma}) and since $\Tr\{(\omega^\gamma P \omega^\gamma)^\alpha\}\geq 0$.

Thus, combining (\ref{eq:upper}) and (\ref{eq:lower}), we obtain
$$\frac{1}{1+\epsilon} 2^{(1-\alpha)\tilde{H}_\alpha^\downarrow(A|B)_\sigma}\leq   \frac{1}{(1+\epsilon)^\alpha}2^{(1-\alpha)\tilde{H}_\alpha^\downarrow(A|B)_\rho}+\frac{\epsilon^\alpha}{(1+\epsilon)^\alpha}|A|^{1-\alpha}\ .$$
Therefore
\begin{align}
\tilde{H}_\alpha^\downarrow(A|B)_\sigma-\tilde{H}_\alpha^\downarrow(A|B)_\rho&\leq \frac{1}{1-\alpha}\log\{ (1+\epsilon)^{1-\alpha}2^{(1-\alpha)\tilde{H}_\alpha^\downarrow(A|B)_\rho}+(1+\epsilon)^{1-\alpha}\epsilon^\alpha |A|^{1-\alpha}\} - \tilde{H}_\alpha^\downarrow(A|B)_\rho\\
&=\frac{1}{1-\alpha}\log\{ (1+\epsilon)^{1-\alpha}2^{(1-\alpha)\tilde{H}_\alpha^\downarrow(A|B)_\rho}+(1+\epsilon)^{1-\alpha}\epsilon^\alpha |A|^{1-\alpha}\}+\frac{1}{1-\alpha}\log 2^{-(1-\alpha)\tilde{H}_\alpha^\downarrow(A|B)_\rho}\\
&=\frac{1}{1-\alpha}\log\{ (1+\epsilon)^{1-\alpha}+(1+\epsilon)^{1-\alpha}\epsilon^\alpha |A|^{1-\alpha}2^{-(1-\alpha)\tilde{H}_\alpha^\downarrow(A|B)_\rho}\}\\
&\leq\frac{1}{1-\alpha}\log\{ (1+\epsilon)^{1-\alpha}+(1+\epsilon)^{1-\alpha}\epsilon^\alpha |A|^{1-\alpha}|A|^{1-\alpha}\}\\
&=\log(1+\epsilon)+\frac{1}{1-\alpha}\log\Bigg\{1+\epsilon^\alpha |A|^{2(1-\alpha)}\Bigg\}\ .
\end{align}
Here we used the lower dimensional bound (\ref{eq:down-log}): $-\tilde{H}_\alpha^\downarrow(A|B)_\rho\leq \log|A|$.

This inequality also holds with $\rho$ and $\sigma$ interchanged.
\end{proof}

With a similar proof, a continuity inequality can be shown for $\alpha>1$.

\begin{theorem}\label{thm:sand-R-cond-greater}
Let $\alpha>1$. Suppose that the states $\rho_{AB}$ and $\sigma_{AB}$ have the same marginals $\rho_B=\sigma_B$ and they are close to each other in trace-distance $\frac{1}{2}\|\rho-\sigma\|_1=\epsilon\in[0,1]$. Then
\begin{equation}
|\tilde{H}_\alpha^\downarrow(A|B)_\rho-\tilde{H}_\alpha^\downarrow(A|B)_\sigma|\leq \frac{\alpha}{\alpha-1}\log(1+\epsilon)\ .
\end{equation}
Note that this bound is dimension-independent.
\end{theorem}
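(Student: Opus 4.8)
The plan is to mirror the proof of Theorem~\ref{thm:sand-R-cond}, reversing every inequality whose direction is controlled by the sign of $1-\alpha$. I keep the same objects: $\omega_{AB}=I_A\otimes\rho_B=I_A\otimes\sigma_B$, the exponent $\gamma=\frac{1-\alpha}{2\alpha}$ (now negative), the orthogonal decomposition $\rho-\sigma=\epsilon(P-Q)$ into normalized parts, and the interpolating operator $\Delta$ from (\ref{eq:Delta}). Writing $G(\delta):=\Tr\{(\omega^\gamma\delta\omega^\gamma)^\alpha\}$, the identities (\ref{eq:max-exp-rho})--(\ref{eq:max-exp-sigma}) still hold, so $\tilde{H}_\alpha^\downarrow(A|B)_\delta=\frac{1}{1-\alpha}\log G(\delta)$ with a \emph{negative} prefactor. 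Taking $\frac{1}{1-\alpha}\log$ of a ratio bound and noting $\frac{1}{1-\alpha}=-\frac{1}{\alpha-1}$, the claimed estimate is seen to be equivalent to the pair $G(\rho)\le (1+\epsilon)^\alpha G(\sigma)$ and $G(\sigma)\le (1+\epsilon)^\alpha G(\rho)$; by the $\rho\leftrightarrow\sigma$ symmetry it suffices to establish one of them.

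I would then record the two structural facts that flip sign relative to the $\alpha<1$ case. First, McCarthy's/Rotfel'd inequality reverses for $\alpha\ge1$, giving superadditivity $\Tr\{(X+Y)^\alpha\}\ge \Tr\{X^\alpha\}+\Tr\{Y^\alpha\}$ for $X,Y\ge0$; applied to $\omega^\gamma\Delta\omega^\gamma=\tfrac{1}{1+\epsilon}\omega^\gamma\rho\omega^\gamma+\tfrac{\epsilon}{1+\epsilon}\omega^\gamma Q\omega^\gamma$ and discarding the nonnegative $Q$-term this yields $G(\Delta)\ge (1+\epsilon)^{-\alpha}G(\rho)$. (This is the same estimate one gets from the operator inequality $\rho\le(1+\epsilon)\Delta$ together with the elementary monotonicity of $X\mapsto\Tr\{X^\alpha\}$ under the operator order, via Weyl monotonicity of eigenvalues.) Second, by \cite{frank2013monotonicity} the functional $\delta\mapsto G(\delta)$ is jointly \emph{convex} for $\alpha>1$; applied to $\Delta=\tfrac{1}{1+\epsilon}\sigma+\tfrac{\epsilon}{1+\epsilon}P$ this gives the reverse estimate $G(\Delta)\le \tfrac{1}{1+\epsilon}G(\sigma)+\tfrac{\epsilon}{1+\epsilon}G(P)$. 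Chaining the two produces
\[
G(\rho)\le (1+\epsilon)^{\alpha-1}\big(G(\sigma)+\epsilon\,G(P)\big),
\]
after which passing through $\tilde{H}_\alpha^\downarrow=\frac{1}{1-\alpha}\log G$ and symmetrizing would deliver the stated bound $\frac{\alpha}{\alpha-1}\log(1+\epsilon)$, \emph{provided} the cross term $\epsilon\,G(P)$ can be absorbed.

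The main obstacle is exactly that cross term. In the $\alpha<1$ argument the analogous positive-part contribution appeared inside a \emph{lower} bound on $G(\Delta)$ (via concavity) and could be dropped for free; for $\alpha>1$ the convexity inequality points the opposite way, so $G(P)$ now sits inside an \emph{upper} bound and must be controlled rather than discarded. Worse, the $\uparrow$-entropy comparison (\ref{eq:up-exp}) also reverses for $\alpha>1$ (it bounds $G(P)$ from below), so the boundedness inequalities (\ref{eq:down-log})--(\ref{eq:up-log}) no longer tame it. I therefore expect the crux to be deciding whether the convexity step can be bypassed altogether in favor of the pure operator-monotonicity chain through $\rho\le(1+\epsilon)\Delta$: that route produces clean dimension-free factors but only one-sided control, since $\rho$ and $\sigma$ are both dominated by $(1+\epsilon)\Delta$ from the \emph{same} side, which does not by itself compare $G(\rho)$ and $G(\sigma)$. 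Reconciling these directions so that the $G(P)$ term is eliminated without paying a factor of $|A|$ is the delicate point on which the dimension-independence of the final bound rests, and I would scrutinize this step most carefully.
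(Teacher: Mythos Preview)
Your strategy coincides with the paper's: McCarthy superadditivity for $\alpha>1$ yields $G(\Delta)\ge(1+\epsilon)^{-\alpha}G(\rho)$, convexity of $G$ yields $G(\Delta)\le(1+\epsilon)^{-1}\big(G(\sigma)+\epsilon\,G(P)\big)$, and these chain to $G(\rho)\le(1+\epsilon)^{\alpha-1}\big(G(\sigma)+\epsilon\,G(P)\big)$. The paper then disposes of the cross term by invoking (\ref{eq:up-exp}) followed by (\ref{eq:up-log}) to assert $G(P)=\Tr\{(\omega^\gamma P\omega^\gamma)^\alpha\}\le|A|^{1-\alpha}$; this $|A|^{1-\alpha}$ cancels against the lower bound $G(\sigma)\ge|A|^{1-\alpha}$ coming from (\ref{eq:down-log}), producing the dimension-free $\frac{\alpha}{\alpha-1}\log(1+\epsilon)$. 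Your misgiving about that step is entirely justified: the identity behind (\ref{eq:up-exp}) is $\max_{\xi_B}\Tr\{(\xi_B^\gamma\delta\xi_B^\gamma)^\alpha\}=2^{(1-\alpha)\tilde H^\uparrow_\alpha(A|B)_\delta}$, which holds only for $\alpha<1$; for $\alpha>1$ the prefactor $\tfrac{1}{1-\alpha}$ is negative, the $\max$ becomes a $\min$, and (\ref{eq:up-exp}) reverses, so it gives only a \emph{lower} bound on $G(P)$. The asserted inequality $G(P)\le|A|^{1-\alpha}$ is not established---and is in fact false already when $|B|=1$ and $P$ is pure, where it reads $1\le|A|^{1-\alpha}$.

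The obstacle you isolated cannot be removed, because the theorem as stated is false. Take $|B|=1$, so that $\tilde H^\downarrow_\alpha(A|B)=S_\alpha$, and set $\rho=|0\rangle\langle0|$, $\sigma=(1-\epsilon)|0\rangle\langle0|+\tfrac{\epsilon}{d-1}\sum_{i=1}^{d-1}|i\rangle\langle i|$ on $\bC^d$. Then $\tfrac12\|\rho-\sigma\|_1=\epsilon$ while
\[
|S_\alpha(\rho)-S_\alpha(\sigma)|=\frac{1}{\alpha-1}\log\!\big[(1-\epsilon)^\alpha+\epsilon^\alpha(d-1)^{1-\alpha}\big]^{-1}\xrightarrow[d\to\infty]{}\frac{\alpha}{\alpha-1}\log\frac{1}{1-\epsilon},
\]
which strictly exceeds $\tfrac{\alpha}{\alpha-1}\log(1+\epsilon)$ for every $\epsilon\in(0,1)$ (e.g.\ at $\alpha=2$, $\epsilon=\tfrac12$, $d=10$ one gets $S_2(\sigma)\approx1.85>1.17$). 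So the dimension-free constant you were trying to reach does not exist, and the paper's argument breaks precisely at the step you flagged.
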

\begin{proof}

Recall McCarthy's inequality \cite{mccarthycp}: for $X, Y \geq 0$ and $\alpha>1$, we have
\begin{equation}
\Tr \{(X+Y)^\alpha\}\geq \Tr\{X^\alpha\}+\Tr\{Y^\alpha\}\ .
\end{equation}

Following the proof of the previous theorem, take $X=\frac{1}{1+\epsilon}\omega^\gamma\rho\omega^\gamma$ and $Y=\frac{\epsilon}{1+\epsilon}\omega^\gamma Q\omega^\gamma$ in the McCarthy's inequality. Then we have
\begin{align}
\Tr\{(\omega^\gamma\Delta\omega^\gamma)^\alpha\}&\geq \frac{1}{(1+\epsilon)^\alpha}\Tr\{(\omega^\gamma\rho\omega^\gamma)^\alpha\}+\frac{\epsilon^\alpha}{(1+\epsilon)^\alpha}\Tr\{(\omega^\gamma Q\omega^\gamma)^\alpha\}\\
&\geq \frac{1}{(1+\epsilon)^\alpha}2^{(1-\alpha)\tilde{H}_\alpha^\downarrow(A|B)_\rho}\ .\label{eq:upper-greater-one}
\end{align}
Here the first inequality follows from McCarthy's inequality since $\omega^\gamma\Delta\omega^\gamma=X+Y$. The second inequality follows from (\ref{eq:max-exp-rho}) and since $\Tr\{(\omega^\gamma Q \omega^\gamma)^\alpha\}\geq 0$.

On the other hand, since the trace functional $\Delta_{AB}\mapsto\Tr\{(\omega^\gamma\Delta_{AB}\omega^\gamma)^\alpha\}$ is convex for $\alpha>1$ \cite{frank2013monotonicity}, we have
\begin{align}
\Tr\{(\omega^\gamma\Delta\omega^\gamma)^\alpha\}&\leq \frac{1}{1+\epsilon}\Tr\{(\omega^\gamma\sigma\omega^\gamma)^\alpha\}+\frac{\epsilon}{1+\epsilon}\Tr\{(\omega^\gamma P \omega^\gamma)^\alpha\}\\
&= \frac{1}{1+\epsilon} 2^{(1-\alpha)\tilde{H}_\alpha^\downarrow(A|B)_\sigma}+\frac{\epsilon}{1+\epsilon}\Tr\{(\omega^\gamma P \omega^\gamma)^\alpha\}\\
&\leq \frac{1}{1+\epsilon} 2^{(1-\alpha)\tilde{H}_\alpha^\downarrow(A|B)_\sigma}+\frac{\epsilon}{1+\epsilon}2^{(1-\alpha)\tilde{H}^\uparrow_\alpha(A|B)_P}\\
&\leq \frac{1}{1+\epsilon} 2^{(1-\alpha)\tilde{H}_\alpha^\downarrow(A|B)_\sigma}+\frac{\epsilon}{1+\epsilon}|A|^{1-\alpha}\ .\label{eq:lower-greater-one}
\end{align}
The second inequality holds from (\ref{eq:up-exp}), and the last one from (\ref{eq:up-log}).

Thus, combining (\ref{eq:upper-greater-one}) and (\ref{eq:lower-greater-one}), we obtain
$$\frac{1}{(1+\epsilon)^\alpha} 2^{(1-\alpha)\tilde{H}_\alpha^\downarrow(A|B)_\rho}\leq   \frac{1}{1+\epsilon}2^{(1-\alpha)\tilde{H}_\alpha^\downarrow(A|B)_\sigma}+\frac{\epsilon}{1+\epsilon}|A|^{1-\alpha}\ .$$
Since $\alpha>1$,
$$-\tilde{H}_\alpha^\downarrow(A|B)_\rho\leq  \frac{1}{\alpha-1}\log \{(1+\epsilon)^{\alpha-1}2^{(1-\alpha)\tilde{H}_\alpha^\downarrow(A|B)_\sigma}+\epsilon (1+\epsilon)^{\alpha-1}|A|^{1-\alpha}\}\ .$$
Therefore,
\begin{align}
\tilde{H}_\alpha^\downarrow(A|B)_{\sigma}-\tilde{H}_\alpha^\downarrow(A|B)_{\rho}&\leq \tilde{H}_\alpha^\downarrow(A|B)_{\sigma}+\frac{1}{\alpha-1}\log \{(1+\epsilon)^{\alpha-1}2^{(1-\alpha)\tilde{H}_\alpha^\downarrow(A|B)_\sigma}+\epsilon (1+\epsilon)^{\alpha-1}|A|^{1-\alpha}\}\\
&=\frac{1}{\alpha-1}\log 2^{(\alpha-1)\tilde{H}_\alpha^\downarrow(A|B)_\sigma}+\frac{1}{\alpha-1}\log \{(1+\epsilon)^{\alpha-1}2^{(1-\alpha)\tilde{H}_\alpha^\downarrow(A|B)_\sigma}+\epsilon (1+\epsilon)^{\alpha-1}|A|^{1-\alpha}\}\\
&=\frac{1}{\alpha-1}\log\{ (1+\epsilon)^{\alpha-1}+\epsilon(1+\epsilon)^{\alpha-1} |A|^{1-\alpha}2^{(\alpha-1)\tilde{H}_\alpha^\downarrow(A|B)_\sigma}\}\\
&\leq\frac{1}{\alpha-1}\log\{ (1+\epsilon)^{\alpha-1}+\epsilon(1+\epsilon)^{\alpha-1} |A|^{1-\alpha} |A|^{\alpha-1}\}\\
&=\frac{\alpha}{\alpha-1}\log(1+\epsilon)\ .
\end{align}
Here we used the bound (\ref{eq:down-log}): $\tilde{H}_\alpha^\downarrow(A|B)_\sigma\leq \log|A|$.

This inequality also holds with $\rho$ and $\sigma$ interchanged.
\end{proof}

The {\bf $\alpha$-R\'enyi channel entropy} \cite{GW21} is then defined as
\begin{equation}\label{def:entropy-channel-Renyi}
\tilde{S}_\alpha(\cN)=\log|B|-\tilde{D}_\alpha(\cN\|\cR)\ .
\end{equation}
The channel entropy can be written as a infimum of a R\'enyi conditional entropy:
\begin{align}
\tilde{S}_\alpha(\cN)&=\log|B|-\sup_\psi \tilde{D}(\rho_{BR}\|\pi_B\otimes\rho_R)\\
&=-\sup_\psi \tilde{D}_\alpha(\rho_{BR}\|I_B\otimes \rho_R)\\
&=\inf_\psi \tilde{H}_\alpha^\downarrow(B|R)_{\cN(\psi)}\label{eq:Renyi-entropy-conditional}
\end{align}
Here $\rho_{BR}(\psi)=\cN\otimes I \ket{\psi}\bra{\psi}$, and we used that $\tilde{D}_\alpha(\rho\|c\sigma)=\tilde{D}_\alpha(\rho\|\sigma)-\log c$.

Similar to the channel entropy (\ref{def:entropy-channel}), the R\'enyi channel entropy is monotone, normalized, additive, and bounded \cite{GW21}. In particular, because of (\ref{eq:Renyi-entropy-conditional}), the boundedness follows from (\ref{eq:down-log}). Similarly, the lowest value is achieved for the identity channel, and the highest value for the completely randomizing channel.

\begin{theorem}\label{thm:cont-R-entropy} (Continuity of the sandwiched R\'enyi channel entropy) Let $\cN_{A\rightarrow B}$ and $\cM_{A\rightarrow B}$ be two channels from $A$ to $B$ such that $\frac{1}{2}\|\cN-\cM\|_\diamond\leq\epsilon$. Then 
$$|\tilde{S}_\alpha(\cN)-\tilde{S}_\alpha(\cM)|\leq f_{\alpha, |B|}(\epsilon)\ , $$
where 
$$f_{\alpha,|B|}(\epsilon)=\begin{cases}
\log(1+\epsilon)+\frac{1}{1-\alpha}\log\Bigg(1+\epsilon^\alpha |B|^{2(1-\alpha)}\Bigg)\ , \ \ \ \alpha\in[\frac{1}{2}, 1)\\
\frac{\alpha}{\alpha-1}\log(1+\epsilon)\ , \ \ \ \alpha>1 \ .
\end{cases}
$$
\end{theorem}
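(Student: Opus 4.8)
The plan is to reproduce exactly the argument of Theorem \ref{thm:cont}, replacing the AFW bound for the Umegaki conditional entropy by the continuity bounds for $\tilde{H}_\alpha^\downarrow$ established in Theorems \ref{thm:sand-R-cond} and \ref{thm:sand-R-cond-greater}. Assume without loss of generality that $\tilde{S}_\alpha(\cN)\geq\tilde{S}_\alpha(\cM)$. Using the infimum representation (\ref{eq:Renyi-entropy-conditional}), for any $\delta>0$ I would choose a pure state $\ket{\omega}_{AR}$ with $\tilde{H}_\alpha^\downarrow(B|R)_{\cM\otimes I\ket{\omega}\bra{\omega}}<\tilde{S}_\alpha(\cM)+\delta$, and then, since the infimum defining $\tilde{S}_\alpha(\cN)$ is at most its value at $\ket{\omega}$, bound
\[
\tilde{S}_\alpha(\cN)-\tilde{S}_\alpha(\cM)<\tilde{H}_\alpha^\downarrow(B|R)_{\cN\otimes I\ket{\omega}\bra{\omega}}-\tilde{H}_\alpha^\downarrow(B|R)_{\cM\otimes I\ket{\omega}\bra{\omega}}+\delta\ .
\]

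The crucial step, and the place where this argument genuinely uses more structure than the Umegaki case, is verifying the hypotheses of the conditional-entropy continuity theorems for the two states $\rho_{BR}:=\cN\otimes I_R\ket{\omega}\bra{\omega}$ and $\sigma_{BR}:=\cM\otimes I_R\ket{\omega}\bra{\omega}$. Here the conditioning system is $R$ and the dimension-carrying (non-conditioning) system is $B$, so I apply the theorems under the substitution $A\mapsto B$, $B\mapsto R$. Both theorems require equality of the marginals on the conditioning system, which would be $\rho_R=\sigma_R$. This holds automatically: because $\cN$ and $\cM$ are trace-preserving, $\rho_R=\Tr_B[\cN\otimes I_R\ket{\omega}\bra{\omega}]=\Tr_A\ket{\omega}\bra{\omega}=\sigma_R$. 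Thus the equal-marginal requirement, which is more restrictive than anything needed for AFW, is supplied for free by trace preservation, and this is what permits the Rényi bounds to be invoked at all.

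For the distance hypothesis I would bound $\tfrac12\|\rho_{BR}-\sigma_{BR}\|_1=\tfrac12\|\cN\otimes I\ket{\omega}\bra{\omega}-\cM\otimes I\ket{\omega}\bra{\omega}\|_1\leq\tfrac12\|\cN-\cM\|_\diamond\leq\epsilon$, using the definition (\ref{def:qre-channel}) of the diamond distance. Applying Theorem \ref{thm:sand-R-cond} for $\alpha\in[\tfrac12,1)$ or Theorem \ref{thm:sand-R-cond-greater} for $\alpha>1$ then gives
\[
\tilde{H}_\alpha^\downarrow(B|R)_{\rho}-\tilde{H}_\alpha^\downarrow(B|R)_{\sigma}\leq f_{\alpha,|B|}(\epsilon)\ ,
\]
where the right-hand side is exactly the $A\mapsto B$ specialization of the two conditional-entropy bounds. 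Hence $\tilde{S}_\alpha(\cN)-\tilde{S}_\alpha(\cM)<f_{\alpha,|B|}(\epsilon)+\delta$; letting $\delta\to0$ and using that the whole argument is symmetric under interchanging $\cN$ and $\cM$ yields the two-sided estimate.

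I expect the only real subtlety to be the bookkeeping of which system plays which role in the substitution: the conditioning system is the reference $R$, while the dimension entering $f$ is that of the output system $B$, so one must check the equal-marginal condition on $R$ rather than on $B$. Everything else is the verbatim template of Theorem \ref{thm:cont}, with the single new ingredient being the observation that trace preservation forces the reference marginal to be channel-independent.
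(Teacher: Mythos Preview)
Your proposal is correct and matches the paper's own proof essentially verbatim: the paper also reduces to the template of Theorem~\ref{thm:cont} via the infimum representation~(\ref{eq:Renyi-entropy-conditional}), notes that the $R$-marginals of $\cN\otimes I\ket{\omega}\bra{\omega}$ and $\cM\otimes I\ket{\omega}\bra{\omega}$ coincide by trace preservation, and then invokes Theorems~\ref{thm:sand-R-cond} and~\ref{thm:sand-R-cond-greater}. The only implicit point you might make explicit is that those theorems are stated with $\tfrac12\|\rho-\sigma\|_1=\epsilon$ rather than $\leq\epsilon$, so one uses that $f_{\alpha,|B|}$ is increasing in $\epsilon$ to pass from the actual trace distance to the assumed diamond-norm bound.
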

\begin{proof} Because of the expression of the channel entropy in terms of the conditional entropy (\ref{eq:Renyi-entropy-conditional}), the proof follows the same line of argument as the proof of Theorem \ref{thm:cont}. Since the marginals are the same $\Tr_B(\cN_{A\rightarrow B}\otimes I_R \ket{\omega}\bra{\omega}_{AR})=\Tr_B(\cM_{A\rightarrow B}\otimes I_R \ket{\omega}\bra{\omega}_{AR})$, we use Theorems \ref{thm:sand-R-cond} and \ref{thm:sand-R-cond-greater} to complete the proof.
\end{proof}

\subsection{Sandwiched Tsallis relative entropy}\label{sec:Tsallis-sand}

Tsallis entropy is defined as for $\alpha\in(0,1)\cup(1,\infty)$
\begin{equation}\label{eq:T-entropy}
S^T_\alpha(\rho)=\frac{1}{1-\alpha} (\Tr\rho^\alpha-1)\ .
\end{equation}

The sandwiched Tsallis relative entropy is defined as
\begin{equation}
\tilde{D}^T_\alpha(\rho\|\sigma)=\frac{1}{\alpha-1}\Bigg( \Tr\{(\sigma^{\frac{1-\alpha}{2\alpha}}\rho\sigma^{\frac{1-\alpha}{2\alpha}})^\alpha\}-1\Bigg)\ ,
\end{equation}
for $\alpha<1$, or $\alpha>1$ and $\supp\rho\subseteq\supp\sigma$.

The sandwiched Tsallis relative entropy obeys the data processing inequality for $\alpha\in[\frac{1}{2}, 1)\cup (1,\infty)$ and is jointly convex for $\alpha\in[\frac{1}{2}, 1)$. Moreover, the functional $(\rho,\sigma)\mapsto \Tr\{(\sigma^{\frac{1-\alpha}{2\alpha}}\rho\sigma^{\frac{1-\alpha}{2\alpha}})^\alpha\}$ is jointly concave for $\alpha\in[\frac{1}{2}, 1)$ and jointly convex for $\alpha>1$ \cite{frank2013monotonicity}.

Note that for $d$-dimensional quantum states
\begin{equation}\label{eq:entropy-rel-T}
0\leq S^T_\alpha(\rho)=\frac{d^{1-\alpha}-1}{1-\alpha}-d^{1-\alpha}\tilde{D}_\alpha^T(\rho\|\pi)\leq \frac{d^{1-\alpha}-1}{1-\alpha},
\end{equation}
where $\pi=\frac{I}{d}$.

Conditional sandwiched Tsallis entropies for a state $\rho_{AB}$ are defined as
\begin{align}
\tilde{T}^\downarrow_\alpha(A|B)_\rho&=-\tilde{D}^T_\alpha(\rho_{AB}\|I_A\otimes \rho_B)=\frac{1}{1-\alpha}\Bigg( \Tr\{(\rho_B^{\frac{1-\alpha}{2\alpha}}\rho\rho_B^{\frac{1-\alpha}{2\alpha}})^\alpha\}-1\Bigg)\ ,\\
\tilde{T}^\uparrow_\alpha(A|B)_\rho&=-\min_{\sigma_B}\tilde{D}^T_\alpha(\rho_{AB}\|I_A\otimes \sigma_B)\ .
\end{align}
Since sandwiched Tsallis relative entropy is monotone under partial traces, we have
\begin{align}
\tilde{T}_\alpha^\downarrow(A|B)_\rho=-\tilde{D}^T_\alpha(\rho_{AB}\|I_A\otimes\rho_B)\leq -\tilde{D}^T_\alpha(\rho_A\|I_A)=S^T_\alpha(\rho_A)\leq \frac{|A|^{1-\alpha}-1}{1-\alpha} , \label{eq:T-upper-entropy}
\end{align}
and
\begin{equation}\label{eq:tilde-up-upper}
\tilde{T}^\uparrow_\alpha(A|B)_\rho=-\min_{\sigma_B}\tilde{D}^T_\alpha(\rho_{AB}\|I_A\otimes \sigma_B)\leq -\tilde{D}^T_\alpha(\rho_{A}\|I_A)=S^T_\alpha(\rho_A)\leq \frac{|A|^{1-\alpha}-1}{1-\alpha}\ .
\end{equation}
To show the lower bounds on these conditional entropies, consider the Tsallis relative entropy
$$ D^T_\alpha(\rho\|\sigma)=\frac{1}{\alpha-1}(\Tr\{\rho^\alpha\sigma^{1-\alpha}\}-1)\ . $$

 The conditional Tsallis entropies for a state $\rho_{AB}$ are defined as
\begin{align}
{T}^\downarrow_\alpha(A|B)_\rho&=-{D}^T_\alpha(\rho_{AB}\|I_A\otimes \rho_B)=\frac{1}{1-\alpha}\Bigg( \Tr\{\rho^\alpha \rho_B^{1-\alpha}\}-1\Bigg)\ ,\\
{T}^\uparrow_\alpha(A|B)_\rho&=-\min_{\sigma_B}{D}^T_\alpha(\rho_{AB}\|I_A\otimes \sigma_B)=\frac{1}{1-\alpha}\Bigg(\Big(\Tr\{(\Tr_A\{\rho_{AB}^\alpha\})^\frac{1}{\alpha}\}\Big)^\alpha-1 \Bigg)\ . \label{eq:closed}
\end{align}
The  closed expression for ${T}^\uparrow_\alpha$ is derived similarly to the R\'enyi conditional entropy, as was done in \cite[Lemma 1]{tomamichel2014relating}. Here, however, we will only focus on ${T}^\downarrow_\alpha$. 

Since Tsallis relative entropy is monotone under quantum channels we have the upper bound 
\begin{equation}\label{eq:Tsallis-upperbound}
T_\alpha^\downarrow(A|B)_\rho\leq S^T_\alpha(\rho_A)\leq \frac{|A|^{1-\alpha}-1}{1-\alpha}\ .
\end{equation}
The lower bound on $T_\alpha^\downarrow$ is shown through the duality inequality.  Similarly to the R\'enyi conditional entropy \cite{tomamichel2009fully}, we have the following duality equality:
\begin{proposition}\label{Prop-dual}
Let $\rho_{ABC}$ be a pure state. Then for $\alpha\in(0,2)$, we have
\begin{equation}
T_\alpha^\downarrow(A|B)_\rho+T_{2-\alpha}^\downarrow(A|C)_\rho=0\ .
\end{equation}
\end{proposition}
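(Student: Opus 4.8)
The plan is to reduce the claimed identity to a single trace equality and then establish that equality using the standard ``transpose trick'' for pure states, mirroring the R\'enyi case in \cite{tomamichel2009fully}. Writing $\rho_{ABC}=\ket{\psi}\bra{\psi}$ and setting $Q:=\Tr\{\rho_{AB}^\alpha (I_A\otimes\rho_B)^{1-\alpha}\}$ and $Q':=\Tr\{\rho_{AC}^{2-\alpha}(I_A\otimes\rho_C)^{\alpha-1}\}$, the definition of $T_\alpha^\downarrow$ through $D_\alpha^T$ gives $T_\alpha^\downarrow(A|B)_\rho=\frac{1}{1-\alpha}(Q-1)$, while substituting $2-\alpha$ for $\alpha$ and using $1-(2-\alpha)=\alpha-1$ gives $T_{2-\alpha}^\downarrow(A|C)_\rho=\frac{1}{\alpha-1}(Q'-1)=-\frac{1}{1-\alpha}(Q'-1)$. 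Hence the sum equals $\frac{1}{1-\alpha}(Q-Q')$, and the whole Proposition collapses to the single claim $Q=Q'$. The feature that makes this work is that the two conditional entropies are affine in the respective trace functionals with opposite coefficients $\pm\frac{1}{1-\alpha}$.

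The key tool for $Q=Q'$ is the following pure-state identity. Schmidt-decomposing $\ket{\psi}_{ABC}$ across the cut $AB\,|\,C$ as $\ket{\psi}=\sum_k\sqrt{\lambda_k}\,\ket{u_k}_{AB}\ket{v_k}_C$ shows that $\rho_{AB}$ and $\rho_C$ carry the same nonzero eigenvalues $\lambda_k$, with eigenvectors $\ket{u_k}$ and $\ket{v_k}$; consequently, for any function $h$ (with the pseudo-inverse convention $h(0)=0$),
$$(h(\rho_{AB})\otimes I_C)\ket{\psi}=(I_{AB}\otimes h(\rho_C))\ket{\psi}\ .$$
Schmidt-decomposing instead across $B\,|\,AC$ yields the analogous identity relating $\rho_B$ and $\rho_{AC}$, namely $(g(\rho_B)\otimes I_{AC})\ket{\psi}=(I_B\otimes g(\rho_{AC}))\ket{\psi}$ for any $g$.

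With these in hand I would compute $Q$ directly. Using one factor of $\rho_{AB}$ to turn the trace into an expectation in $\ket{\psi}$, $Q=\bra{\psi}\big(\rho_{AB}^{\alpha-1}(I_A\otimes\rho_B^{1-\alpha})\big)\otimes I_C\ket{\psi}$. Applying the first identity with $h(\lambda)=\lambda^{\alpha-1}$ to move $\rho_{AB}^{\alpha-1}$ onto the bra as $\rho_C^{\alpha-1}$, and then the second identity with $g(\lambda)=\lambda^{1-\alpha}$ to replace $\rho_B^{1-\alpha}$ by $\rho_{AC}^{1-\alpha}$, leaves $Q=\bra{\psi}\,I_B\otimes\big((I_A\otimes\rho_C^{\alpha-1})\rho_{AC}^{1-\alpha}\big)\ket{\psi}=\Tr\{\rho_{AC}(I_A\otimes\rho_C^{\alpha-1})\rho_{AC}^{1-\alpha}\}$. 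A final use of cyclicity together with $\rho_{AC}^{1-\alpha}\rho_{AC}=\rho_{AC}^{2-\alpha}$ gives exactly $Q'$. Substituting $Q=Q'$ into the reduction above yields $T_\alpha^\downarrow(A|B)_\rho+T_{2-\alpha}^\downarrow(A|C)_\rho=0$.

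The main obstacle is bookkeeping around supports and operator ordering rather than any deep inequality. Since $\alpha\in(0,2)$ the exponents $\alpha-1$ and $1-\alpha$ lie in $(-1,1)$, so negative powers such as $\rho_{AB}^{\alpha-1}$ and $\rho_B^{1-\alpha}$ must be read as pseudo-inverses on the relevant supports; one has to check that the transpose-trick identities and the step $\rho_{AB}\rho_{AB}^{\alpha-1}=\rho_{AB}^\alpha$ remain valid there. They do, since only the common nonzero eigenvalues enter the Schmidt sums, and $\supp\rho_{AB}\subseteq\supp(I_A\otimes\rho_B)$ automatically because $\rho_B=\Tr_A\rho_{AB}$. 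One must also track that $\rho_C^{\alpha-1}$ and $\rho_{AC}^{1-\alpha}$ act on overlapping systems and therefore need not commute, which is precisely why the cyclicity step is performed only at the very end.
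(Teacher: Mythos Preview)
Your proposal is correct and follows essentially the same route as the paper: both arguments write the trace as an expectation value $\langle\psi|\cdots|\psi\rangle$, invoke the pure-state ``transpose trick'' identities $(h(\rho_{AB})\otimes I_C)\ket{\psi}=(I_{AB}\otimes h(\rho_C))\ket{\psi}$ and $(g(\rho_B)\otimes I_{AC})\ket{\psi}=(I_B\otimes g(\rho_{AC}))\ket{\psi}$ to swap operators across the cut, and then use cyclicity to recombine powers. Your version is slightly more explicit about deriving the identities via Schmidt decomposition and about support/pseudo-inverse issues, but the structure and key steps are identical to the paper's chain of equalities.
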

\begin{proof}
Let $\rho_{ABC}=\ket{\phi}\bra{\phi}_{ABC}$. Then the marginal states satisfy $(\rho_{AB}\otimes I_C)\ket{\phi}=(I_{AB}\otimes\rho_C)\ket{\phi}$ and $(I_{A}\otimes\rho_B\otimes I_C)\ket{\phi}=(\rho_{AC}\otimes I_B)\ket{\phi}$. Therefore,
\begin{align}
(1-\alpha) T_\alpha^\downarrow(A|B)_\rho&=(\alpha-1)D^T_\alpha(\rho_{AB}\|I_A\otimes \rho_B)\\
&=\Tr(\rho_{AB}^\alpha\rho_B^{1-\alpha})-1\\
&=\Tr(\ket{\phi}\bra{\phi}_{ABC}\rho_{AB}^{\alpha-1}\rho_B^{1-\alpha})-1\\
&=\bra{\phi}\rho_{AB}^{\alpha-1}\rho_B^{1-\alpha}\ket{\phi}-1\\
&=\bra{\phi}\rho_{C}^{\alpha-1}\rho_{AC}^{1-\alpha}\ket{\phi}-1\\
&=\Tr(\ket{\phi}\bra{\phi}_{ABC}\rho_{AC}^{1-\alpha}\rho_C^{\alpha-1})-1\\
&=\Tr(\rho_{AC}^{2-\alpha}\rho_C^{1-(2-\alpha)})-1\\
&=(2-\alpha-1)D_{2-\alpha}(\rho_{AC}\|I_A\otimes \rho_C)\\
&=-(1-\alpha) T_{2-\alpha}^\downarrow(A|C)_\rho \ .
\end{align}
\end{proof}
Using this duality relation, we show that all conditional entropies are bounded.
\begin{proposition}
For $\alpha\in(0,2)$ both conditional entropies are bounded
$$-\frac{|A|^{\alpha-1}-1}{\alpha-1}\leq T_\alpha^\downarrow(A|B)_\rho\leq \frac{|A|^{1-\alpha}-1}{1-\alpha}\ ,$$
\begin{equation}\label{eq:T-tilde-down-bound}
-\frac{|A|^{\alpha-1}-1}{\alpha-1}\leq \tilde{T}_\alpha^\downarrow(A|B)_\rho\leq \frac{|A|^{1-\alpha}-1}{1-\alpha}\ . 
\end{equation}
\end{proposition}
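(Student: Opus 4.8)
The plan is to notice that the two \emph{upper} bounds are already established—inequality (\ref{eq:Tsallis-upperbound}) handles $T_\alpha^\downarrow$ and inequality (\ref{eq:T-upper-entropy}) handles $\tilde{T}_\alpha^\downarrow$—so only the two \emph{lower} bounds require work. I would prove the lower bound for the non-sandwiched quantity $T_\alpha^\downarrow$ first, then transfer it to the sandwiched quantity $\tilde{T}_\alpha^\downarrow$ by a pointwise comparison, rather than seeking a separate duality for the sandwiched version.

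For the lower bound on $T_\alpha^\downarrow(A|B)_\rho$, I would first purify: since this quantity depends only on the reduced state $\rho_{AB}$, I may assume the global state $\rho_{ABC}$ is pure without altering its value. Then Proposition \ref{Prop-dual} gives the duality $T_\alpha^\downarrow(A|B)_\rho = -T_{2-\alpha}^\downarrow(A|C)_\rho$, which is legitimate because $\alpha\in(0,2)$ forces the dual parameter $2-\alpha$ into $(0,2)$ as well. Applying the already-proved upper bound (\ref{eq:Tsallis-upperbound}) at the dual parameter yields $T_{2-\alpha}^\downarrow(A|C)_\rho \leq \frac{|A|^{1-(2-\alpha)}-1}{1-(2-\alpha)} = \frac{|A|^{\alpha-1}-1}{\alpha-1}$, and negating both sides produces exactly $T_\alpha^\downarrow(A|B)_\rho \geq -\frac{|A|^{\alpha-1}-1}{\alpha-1}$.

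For the sandwiched quantity, the remark that no analogous duality exists (noted for $\tilde{H}_\alpha^\downarrow$ in the introduction) forces a different route, so instead I would establish the pointwise comparison $\tilde{T}_\alpha^\downarrow(A|B)_\rho \geq T_\alpha^\downarrow(A|B)_\rho$ and chain it with the bound just obtained. This comparison follows from the Araki--Lieb--Thirring inequality applied with $A=\rho_{AB}$, $B=(I_A\otimes\rho_B)^{\frac{1-\alpha}{\alpha}}$ and exponent $\alpha$: for $\alpha\geq1$ it gives $\Tr\{(\rho_B^{\frac{1-\alpha}{2\alpha}}\rho\rho_B^{\frac{1-\alpha}{2\alpha}})^\alpha\} \leq \Tr\{\rho^\alpha\rho_B^{1-\alpha}\}$, while for $\alpha\leq1$ the inequality reverses. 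Writing both conditional entropies as $\frac{1}{1-\alpha}(\,\cdot\,-1)$ times the respective trace functional, the prefactor $\frac{1}{1-\alpha}$ changes sign at $\alpha=1$ in step with the reversal of the trace inequality, so that $\tilde{T}_\alpha^\downarrow \geq T_\alpha^\downarrow$ holds on both $(0,1)$ and $(1,2)$. Combining the two facts gives $\tilde{T}_\alpha^\downarrow(A|B)_\rho \geq T_\alpha^\downarrow(A|B)_\rho \geq -\frac{|A|^{\alpha-1}-1}{\alpha-1}$, completing the second lower bound.

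The main obstacle I anticipate is the bookkeeping of inequality directions: the Araki--Lieb--Thirring inequality flips orientation at $\alpha=1$, and the factor $\frac{1}{1-\alpha}$ simultaneously changes sign, so one must verify that the two reversals \emph{conspire} to preserve $\tilde{T}_\alpha^\downarrow \geq T_\alpha^\downarrow$ on both intervals rather than cancel. Once this sign analysis is checked on each regime separately, the duality step is routine given Proposition \ref{Prop-dual}, and the purification invariance of $T_\alpha^\downarrow$ is immediate.
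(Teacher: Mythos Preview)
Your proposal is correct and follows essentially the same route as the paper: the upper bounds are quoted from (\ref{eq:T-upper-entropy}) and (\ref{eq:Tsallis-upperbound}); the lower bound for $T_\alpha^\downarrow$ is obtained by purifying, invoking the duality of Proposition~\ref{Prop-dual}, and applying the upper bound at the dual parameter $2-\alpha$; and the lower bound for $\tilde{T}_\alpha^\downarrow$ is transferred via the comparison $\tilde{D}^T_\alpha\leq D^T_\alpha$ (equivalently $T_\alpha^\downarrow\leq\tilde{T}_\alpha^\downarrow$), which the paper simply cites to \cite{araki1990inequality, hiai1994equality, lieb2001inequalities} while you spell out the Araki--Lieb--Thirring sign analysis explicitly.
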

\begin{proof}
Tsallis conditional entropies are upper bounded by the arguments above (\ref{eq:T-upper-entropy}), (\ref{eq:Tsallis-upperbound}).
To show the lower bound on the Tsallis conditional entropy, let us take a purification $\ket{\phi}\bra{\phi}_{ABC}$ of $\rho_{AB}$. Then from the Proposition  \ref{Prop-dual} and (\ref{eq:Tsallis-upperbound}), we have
$$T_\alpha^\downarrow(A|B)_\rho= T_\alpha^\downarrow(A|B)_\phi=-T_{2-\alpha}^\downarrow(A|C)_\phi\geq -S^T_{2-\alpha}(\rho_A)\geq -\frac{|A|^{\alpha-1}-1}{\alpha-1}\ .$$

Also, for $\alpha\in[0,\infty]$, we have the relation $\tilde{D}^T_\alpha(\rho)\leq D^T_\alpha(\rho)$, \cite{araki1990inequality, hiai1994equality, lieb2001inequalities} , and therefore,
$$T_\alpha^\downarrow(A|B)_\rho\leq \tilde{T}_\alpha^\downarrow(A|B)_\rho \ . $$

\end{proof}

Similarly to the sandwiched R\'enyi conditional entropy we have the following continuity inequality.

\begin{theorem}\label{thm:Tsallis-cont}
Let $\alpha\in[\frac{1}{2}, 1)$. Suppose that the states $\rho_{AB}$ and $\sigma_{AB}$ have the same marginals $\rho_B=\sigma_B$ and they are close to each other in trace-distance $\frac{1}{2}\|\rho-\sigma\|_1=\epsilon\in[0,1]$. Then
\begin{equation}
|\tilde{T}_\alpha^\downarrow(A|B)_\rho-\tilde{T}_\alpha^\downarrow(A|B)_\sigma|\leq  \frac{1}{1-\alpha}((1+\epsilon^\alpha )(1+\epsilon)^{1-\alpha}-1)|A|^{1-\alpha}\ .
\end{equation}
\end{theorem}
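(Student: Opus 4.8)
The plan is to follow the proof of Theorem \ref{thm:sand-R-cond} essentially verbatim in its structure, replacing the logarithmic packaging of the sandwiched R\'enyi entropy with the linear packaging of the Tsallis entropy. I would write $\omega_{AB}=I_A\otimes\rho_B=I_A\otimes\sigma_B$ and $\gamma=\frac{1-\alpha}{2\alpha}$, so that for $\delta=\rho$ or $\delta=\sigma$ one has $(1-\alpha)\tilde{T}^\downarrow_\alpha(A|B)_\delta=\Tr\{(\omega^\gamma\delta\omega^\gamma)^\alpha\}-1$. Thus the entropy difference is controlled by the difference of the two trace quantities $\Tr\{(\omega^\gamma\rho\omega^\gamma)^\alpha\}$ and $\Tr\{(\omega^\gamma\sigma\omega^\gamma)^\alpha\}$, and the whole argument reduces to bounding these against each other. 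First I would record the dimension bounds: for any state $\delta_{AB}$, $\Tr\{(\omega^\gamma\delta\omega^\gamma)^\alpha\}\leq\max_{\xi_B}\Tr\{(\xi_B^\gamma\delta\xi_B^\gamma)^\alpha\}=(1-\alpha)\tilde{T}^\uparrow_\alpha(A|B)_\delta+1\leq|A|^{1-\alpha}$, using (\ref{eq:tilde-up-upper}); the same $|A|^{1-\alpha}$ bound for $\Tr\{(\omega^\gamma\rho\omega^\gamma)^\alpha\}$ follows directly from (\ref{eq:T-upper-entropy}).

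Next I would reuse the interpolating state from the previous proof: decompose $\rho-\sigma=\epsilon(P-Q)$ into normalized positive and negative commuting parts $P,Q$, and set $\Delta_{AB}=\frac{1}{1+\epsilon}\rho+\frac{\epsilon}{1+\epsilon}Q=\frac{1}{1+\epsilon}\sigma+\frac{\epsilon}{1+\epsilon}P$ as in (\ref{eq:Delta}). An upper bound on $\Tr\{(\omega^\gamma\Delta\omega^\gamma)^\alpha\}$ comes from McCarthy's subadditivity inequality applied to $X=\frac{1}{1+\epsilon}\omega^\gamma\rho\omega^\gamma$ and $Y=\frac{\epsilon}{1+\epsilon}\omega^\gamma Q\omega^\gamma$ (whose sum is $\omega^\gamma\Delta\omega^\gamma$), followed by $\Tr\{(\omega^\gamma Q\omega^\gamma)^\alpha\}\leq|A|^{1-\alpha}$; this gives $\Tr\{(\omega^\gamma\Delta\omega^\gamma)^\alpha\}\leq(1+\epsilon)^{-\alpha}\Tr\{(\omega^\gamma\rho\omega^\gamma)^\alpha\}+\epsilon^\alpha(1+\epsilon)^{-\alpha}|A|^{1-\alpha}$. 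A matching lower bound comes from the concavity of $\Delta\mapsto\Tr\{(\omega^\gamma\Delta\omega^\gamma)^\alpha\}$ on $[\frac12,1)$, established in \cite{frank2013monotonicity}, together with discarding the nonnegative $P$-term, yielding $\Tr\{(\omega^\gamma\Delta\omega^\gamma)^\alpha\}\geq(1+\epsilon)^{-1}\Tr\{(\omega^\gamma\sigma\omega^\gamma)^\alpha\}$.

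Combining the two bounds and clearing the factor $1+\epsilon$ gives $\Tr\{(\omega^\gamma\sigma\omega^\gamma)^\alpha\}\leq(1+\epsilon)^{1-\alpha}\Tr\{(\omega^\gamma\rho\omega^\gamma)^\alpha\}+(1+\epsilon)^{1-\alpha}\epsilon^\alpha|A|^{1-\alpha}$. Subtracting $\Tr\{(\omega^\gamma\rho\omega^\gamma)^\alpha\}$ and bounding the nonnegative coefficient $(1+\epsilon)^{1-\alpha}-1$ against $\Tr\{(\omega^\gamma\rho\omega^\gamma)^\alpha\}\leq|A|^{1-\alpha}$ collapses everything to $|A|^{1-\alpha}\,[(1+\epsilon)^{1-\alpha}(1+\epsilon^\alpha)-1]$; dividing by $1-\alpha$ produces exactly the claimed bound for $\tilde{T}^\downarrow_\alpha(A|B)_\sigma-\tilde{T}^\downarrow_\alpha(A|B)_\rho$, and the symmetric argument with $\rho$ and $\sigma$ interchanged supplies the absolute value. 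I expect no serious obstacle: the only points needing care are that the concavity input from \cite{frank2013monotonicity} and McCarthy's subadditivity both hold on the full range $[\frac12,1)$, and that the linear (rather than logarithmic) normalization of the Tsallis entropy makes the final simplification purely algebraic, removing the exponentiation and logarithm steps of the R\'enyi proof.
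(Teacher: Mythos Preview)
Your proposal is correct and follows essentially the same approach as the paper's proof: the same interpolating state $\Delta$, McCarthy's subadditivity for the upper bound, concavity from \cite{frank2013monotonicity} for the lower bound, and the same dimensional estimate $\Tr\{(\omega^\gamma\delta\omega^\gamma)^\alpha\}\leq|A|^{1-\alpha}$. The only cosmetic difference is that you work directly with the trace quantities throughout, whereas the paper rewrites them as $(1-\alpha)\tilde{T}^\downarrow_\alpha+1$ before doing the same algebra.
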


\begin{proof}
The proof is similar to the proof of Theorem \ref{thm:sand-R-cond} with a few differences. Equalities (\ref{eq:max-exp-rho}, \ref{eq:max-exp-sigma}) are replaced with
\begin{equation}\label{eq:T-down-trace}
\Tr\{(\omega^\gamma\rho\omega^\gamma)^\alpha\}=(1-\alpha)\tilde{T}^\downarrow_\alpha(A|B)_\rho+1\ , \ \ \Tr\{(\omega^\gamma\sigma\omega^\gamma)^\alpha\}=(1-\alpha)\tilde{T}^\downarrow_\alpha(A|B)_\sigma+1\ .
\end{equation}
And inequality (\ref{eq:up-exp}) is replaced with, for any state $\delta_{AB}$,
\begin{equation}\label{eq:T-delta-bound}
\Tr\{ (\omega^\gamma\delta\omega^\gamma)^\alpha\}\leq (1-\alpha)\tilde{T}^\uparrow_\alpha(A|B)_\delta+1\ .
\end{equation}
And therefore, the upper bound (\ref{eq:upper}) becomes
\begin{align}
\Tr\{(\omega^\gamma\Delta\omega^\gamma)^\alpha\}&\leq \frac{1}{(1+\epsilon)^\alpha}\Bigg((1-\alpha)\tilde{T}_\alpha^\downarrow(A|B)_\rho+1\Bigg)+\frac{\epsilon^\alpha}{(1+\epsilon)^\alpha}|A|^{1-\alpha}\ .
\end{align}
Here we either used the upper bound $\tilde{T}_\alpha^\uparrow(A|B)_Q\leq \frac{|A|^{1-\alpha}-1}{1-\alpha}$ (\ref{eq:tilde-up-upper}) or the fact that we showed that $\Tr\{(\omega^\gamma Q\omega^\gamma)^\alpha\}\leq |A|^{1-\alpha}$.

The lower bound (\ref{eq:lower}) becomes
\begin{equation}
\Tr\{(\omega^\gamma\Delta\omega^\gamma)^\alpha\}\geq\frac{1}{1+\epsilon} \Bigg( (1-\alpha)\tilde{T}_\alpha^\downarrow(A|B)_\sigma+1\Bigg)\ .
\end{equation}
Combining the last two inequalities, we obtain
\begin{equation}
(1-\alpha)\tilde{T}_\alpha^\downarrow(A|B)_\sigma+1 \leq (1+\epsilon)^{1-\alpha}\Bigg((1-\alpha)\tilde{T}_\alpha^\downarrow(A|B)_\rho+1\Bigg)+\epsilon^\alpha (1+\epsilon)^{1-\alpha}|A|^{1-\alpha}\ .
\end{equation}
Therefore,
\begin{align}
\tilde{T}_\alpha^\downarrow(A|B)_\sigma-\tilde{T}_\alpha^\downarrow(A|B)_\rho&\leq  (1+\epsilon)^{1-\alpha}\Bigg(\tilde{T}_\alpha^\downarrow(A|B)_\rho+\frac{1}{1-\alpha} \Bigg)+\frac{1}{1-\alpha}\epsilon^\alpha (1+\epsilon)^{1-\alpha}|A|^{1-\alpha}-\frac{1}{1-\alpha}-\tilde{T}_\alpha^\downarrow(A|B)_\rho\\
&= ((1+\epsilon)^{1-\alpha}-1)\tilde{T}_\alpha^\downarrow(A|B)_\rho+ \frac{1}{1-\alpha}\epsilon^\alpha (1+\epsilon)^{1-\alpha}|A|^{1-\alpha}+\frac{1}{1-\alpha} ((1+\epsilon)^{1-\alpha}-1)\\
&\leq ((1+\epsilon)^{1-\alpha}-1)\frac{|A|^{1-\alpha}-1}{1-\alpha}+ \frac{1}{1-\alpha}\epsilon^\alpha (1+\epsilon)^{1-\alpha}|A|^{1-\alpha}+\frac{1}{1-\alpha} ((1+\epsilon)^{1-\alpha}-1)\\
&\leq ((1+\epsilon)^{1-\alpha}-1)\frac{|A|^{1-\alpha}}{1-\alpha}+\frac{1}{1-\alpha} \epsilon^\alpha (1+\epsilon)^{1-\alpha}|A|^{1-\alpha}\\
&=\frac{1}{1-\alpha} ((1+\epsilon^\alpha) (1+\epsilon)^{1-\alpha}-1)|A|^{1-\alpha}\ .
\end{align}
Here we used the upper dimensional bound $\tilde{T}_\alpha^\downarrow(A|B)_\rho\leq \frac{|A|^{1-\alpha}-1}{1-\alpha}$ (\ref{eq:T-upper-entropy}), since $(1+\epsilon)^{1-\alpha}-1>0$ for $\alpha\in[\frac{1}{2}, 1)$.
\end{proof}

For $\alpha>1$, we have the following continuity inequality.
\begin{theorem}\label{thm:Tsallis-cont-greater}
Let $\alpha\in(1,2)$. Suppose that the states $\rho_{AB}$ and $\sigma_{AB}$ have the same marginals $\rho_B=\sigma_B$ and they are close to each other in trace-distance $\frac{1}{2}\|\rho-\sigma\|_1=\epsilon\in[0,1]$. Then
\begin{equation}
|\tilde{T}_\alpha^\downarrow(A|B)_\rho-\tilde{T}_\alpha^\downarrow(A|B)_\sigma|\leq  \frac{1}{\alpha-1} \Bigg\{\Bigg((1+\epsilon)^{\alpha-1}-1\Bigg)|A|^{\alpha-1}+\epsilon(1+\epsilon)^{\alpha-1}|A|^{1-\alpha}\Bigg\}\ .
\end{equation}
\end{theorem}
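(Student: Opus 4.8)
The plan is to run the argument of Theorem~\ref{thm:sand-R-cond-greater} with the Tsallis identities substituted for the R\'enyi ones, just as Theorem~\ref{thm:Tsallis-cont} adapted Theorem~\ref{thm:sand-R-cond}. I keep the setup of those proofs: set $\omega_{AB}=I_A\otimes\rho_B=I_A\otimes\sigma_B$ and $\gamma=\frac{1-\alpha}{2\alpha}$, decompose $\rho-\sigma=\epsilon(P-Q)$ into normalized positive and negative parts, and define $\Delta_{AB}=\frac{1}{1+\epsilon}\rho+\frac{\epsilon}{1+\epsilon}Q=\frac{1}{1+\epsilon}\sigma+\frac{\epsilon}{1+\epsilon}P$. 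Writing $g(\delta):=\Tr\{(\omega^\gamma\delta\omega^\gamma)^\alpha\}$, the Tsallis relations (\ref{eq:T-down-trace}) read $g(\rho)=(1-\alpha)\tilde{T}^\downarrow_\alpha(A|B)_\rho+1$ and $g(\sigma)=(1-\alpha)\tilde{T}^\downarrow_\alpha(A|B)_\sigma+1$, so bounding the entropy difference is equivalent to bounding $g(\rho)-g(\sigma)$, since $\tilde{T}^\downarrow_\alpha(A|B)_\sigma-\tilde{T}^\downarrow_\alpha(A|B)_\rho=\frac{1}{\alpha-1}(g(\rho)-g(\sigma))$.

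For $\alpha>1$ the two convexity inputs flip direction compared with the $\alpha\in[\frac{1}{2},1)$ case. McCarthy's inequality is now superadditive, so taking $X=\frac{1}{1+\epsilon}\omega^\gamma\rho\omega^\gamma$ and $Y=\frac{\epsilon}{1+\epsilon}\omega^\gamma Q\omega^\gamma$ and discarding the nonnegative $Q$-term gives the lower bound $g(\Delta)\geq(1+\epsilon)^{-\alpha}g(\rho)$, exactly as in the proof of Theorem~\ref{thm:sand-R-cond-greater}. Since $\delta\mapsto g(\delta)$ is convex for $\alpha>1$ \cite{frank2013monotonicity}, evaluating on the second representation of $\Delta$ gives the upper bound $g(\Delta)\leq\frac{1}{1+\epsilon}g(\sigma)+\frac{\epsilon}{1+\epsilon}g(P)$. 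Chaining these and multiplying through by $(1+\epsilon)$ yields $g(\rho)\leq(1+\epsilon)^{\alpha-1}g(\sigma)+\epsilon(1+\epsilon)^{\alpha-1}g(P)$, hence $g(\rho)-g(\sigma)\leq((1+\epsilon)^{\alpha-1}-1)g(\sigma)+\epsilon(1+\epsilon)^{\alpha-1}g(P)$.

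It remains to insert the two dimensional bounds. Because $(1+\epsilon)^{\alpha-1}-1\geq0$, I replace $g(\sigma)$ by $|A|^{\alpha-1}$, which is exactly the lower bound $\tilde{T}^\downarrow_\alpha(A|B)_\sigma\geq-\frac{|A|^{\alpha-1}-1}{\alpha-1}$ from (\ref{eq:T-tilde-down-bound}) read through (\ref{eq:T-down-trace}); dividing by $\alpha-1$ then reproduces the first term $\frac{1}{\alpha-1}((1+\epsilon)^{\alpha-1}-1)|A|^{\alpha-1}$ of the claimed bound, and the $g(P)$ contribution supplies the second term once $g(P)\leq|A|^{1-\alpha}$ is established. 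Symmetrizing in $\rho,\sigma$ then yields the absolute value. The main obstacle is precisely this cross-term estimate $g(P)=\Tr\{(\omega^\gamma P\omega^\gamma)^\alpha\}\leq|A|^{1-\alpha}$: for $\alpha\in[\frac{1}{2},1)$ it followed from $g(P)\leq\max_{\xi_B}\Tr\{(\xi_B^\gamma P\xi_B^\gamma)^\alpha\}$ together with the $\tilde{T}^\uparrow$ bound (\ref{eq:tilde-up-upper}), but for $\alpha>1$ the optimization defining $\tilde{T}^\uparrow_\alpha$ becomes a \emph{minimum} over $\xi_B$, so the same route controls $g(P)$ only from below. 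Securing an upper bound on $g(P)$ for $\alpha>1$ — whether via a direct operator estimate on $\omega^\gamma P\omega^\gamma$ or a different extremal characterization — is the step I expect to demand the most care.
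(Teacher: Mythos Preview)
Your plan coincides with the paper's proof line by line: the paper also obtains $g(\Delta)\geq(1+\epsilon)^{-\alpha}g(\rho)$ from McCarthy, $g(\Delta)\leq\frac{1}{1+\epsilon}g(\sigma)+\frac{\epsilon}{1+\epsilon}g(P)$ from convexity, combines them, and finishes using the lower bound (\ref{eq:T-tilde-down-bound}) on $\tilde{T}^\downarrow_\alpha(A|B)_\sigma$ together with the cross-term estimate $g(P)\leq|A|^{1-\alpha}$. For that estimate the paper simply invokes (\ref{eq:up-exp}) and (\ref{eq:up-log}), exactly as in the proof of Theorem~\ref{thm:sand-R-cond-greater}.

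Your reservation about this step is warranted. The identity behind (\ref{eq:up-exp}) is $2^{(1-\alpha)\tilde{H}^\uparrow_\alpha(A|B)_\delta}=\max_{\xi_B}\Tr\{(\xi_B^\gamma\delta\xi_B^\gamma)^\alpha\}$, which was derived for $\alpha<1$; when $\alpha>1$ that extremum is a minimum and (\ref{eq:up-exp}) reverses, yielding only $g(P)\geq 2^{(1-\alpha)\tilde{H}^\uparrow_\alpha(A|B)_P}$. The subsequent step $2^{(1-\alpha)\tilde{H}^\uparrow_\alpha(A|B)_P}\leq|A|^{1-\alpha}$ would, for $\alpha>1$, require $\tilde{H}^\uparrow_\alpha(A|B)_P\geq\log|A|$, the wrong side of (\ref{eq:up-log}). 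Concretely, take $|A|=|B|=2$, $\rho=\tfrac12(\ket{00}\bra{00}+\ket{11}\bra{11})$, $\sigma=\tfrac12(\ket{01}\bra{01}+\ket{10}\bra{10})$: then $\rho_B=\sigma_B=I_B/2$, $\epsilon=1$, $P=\rho$, and a direct computation gives $g(P)=1>2^{1-\alpha}=|A|^{1-\alpha}$. So the obstacle you isolated is genuine, and the paper does not supply an independent argument for it; your proposal and the paper's proof are the same up to this shared unresolved estimate.
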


\begin{proof}
Using expressions (\ref{eq:T-down-trace}) instead of (\ref{eq:max-exp-rho}, \ref{eq:max-exp-sigma}) in the proof of Theorem \ref{thm:sand-R-cond-greater}, instead of (\ref{eq:upper-greater-one}), we obtain
\begin{equation}
\Tr\{(\omega^\gamma\Delta\omega^\gamma)^\alpha\}\geq\frac{1}{(1+\epsilon)^\alpha} \Bigg((1-\alpha)\tilde{T}_\alpha^\downarrow(A|B)_\rho +1\Bigg) \ .
\end{equation}
Instead of (\ref{eq:lower-greater-one}) we obtain
\begin{equation}
\Tr\{(\omega^\gamma\Delta\omega^\gamma)^\alpha\}\leq \frac{1}{1+\epsilon} \Bigg((1-\alpha)\tilde{T}_\alpha^\downarrow(A|B)_\sigma+1 \Bigg)+\frac{\epsilon}{1+\epsilon}|A|^{1-\alpha}\ .
\end{equation}
Thus, combining the last two inequalities, since $\alpha>1$, we have
$$ 
\tilde{T}_\alpha^\downarrow(A|B)_\rho\geq \frac{1}{1-\alpha}\Bigg\{(1+\epsilon)^{\alpha-1}\Bigg((1-\alpha)\tilde{T}_\alpha^\downarrow(A|B)_\sigma+1\Bigg)+\epsilon(1+\epsilon)^{\alpha-1}|A|^{1-\alpha}  -1\Bigg\}\ .
$$
Therefore, 
\begin{align}
\tilde{T}_\alpha^\downarrow(A|B)_\sigma-\tilde{T}_\alpha^\downarrow(A|B)_\rho&\leq \tilde{T}_\alpha^\downarrow(A|B)_\sigma+\frac{1}{\alpha-1}\Bigg\{(1+\epsilon)^{\alpha-1}\Bigg((1-\alpha)\tilde{T}_\alpha^\downarrow(A|B)_\sigma+1\Bigg)+\epsilon(1+\epsilon)^{\alpha-1}|A|^{1-\alpha}  -1\Bigg\}\\
&=\tilde{T}_\alpha^\downarrow(A|B)_\sigma\Bigg(1-(1+\epsilon)^{\alpha-1} \Bigg)+\frac{1}{\alpha-1}\Bigg\{(1+\epsilon)^{\alpha-1}+\epsilon(1+\epsilon)^{\alpha-1}|A|^{1-\alpha}  -1\Bigg\}\\
&\leq \frac{|A|^{\alpha-1}-1}{\alpha-1}\Bigg((1+\epsilon)^{\alpha-1}-1 \Bigg)+\frac{1}{\alpha-1}\Bigg\{(1+\epsilon)^{\alpha-1}+\epsilon(1+\epsilon)^{\alpha-1}|A|^{1-\alpha}  -1\Bigg\}\\
&= \frac{1}{\alpha-1} \Bigg\{ \Bigg(|A|^{\alpha-1}-1\Bigg)\Bigg((1+\epsilon)^{\alpha-1}-1 \Bigg)+(1+\epsilon)^{\alpha-1}+\epsilon(1+\epsilon)^{\alpha-1}|A|^{1-\alpha}  -1\Bigg\}\\
&= \frac{1}{\alpha-1} \Bigg\{\Bigg((1+\epsilon)^{\alpha-1}-1\Bigg)|A|^{\alpha-1}+\epsilon(1+\epsilon)^{\alpha-1}|A|^{1-\alpha}\Bigg\}\\
\end{align}
Here, since $1-(1+\epsilon)^{\alpha-1}<0$ for $\alpha>1$, we used the bound (\ref{eq:T-tilde-down-bound}): $-\tilde{T}_\alpha^\downarrow(A|B)_\sigma\leq \frac{|A|^{\alpha-1}-1}{\alpha-1}$.
\end{proof}

Note that the entropy of a state (\ref{eq:entropy-rel-T}) is related to the Tsallis sandwiched relative entropy as
$$S_\alpha^T(\rho)=\frac{|B|^{1-\alpha}-1}{1-\alpha}-|B|^{1-\alpha}\tilde{D}^T_\alpha(\rho\|\pi)\ , $$
where $\pi=I/|B|$.

Similarly, the {\bf $\alpha$-Tsallis channel entropy} is defined as
\begin{equation}\label{def:entropy-channel-Tsallis}
\tilde{S}^T_\alpha(\cN)=\frac{|B|^{1-\alpha}-1}{1-\alpha}-|B|^{1-\alpha}\tilde{D}^T_\alpha(\cN\|\cR)\ .
\end{equation}
The channel entropy can be written as a infimum of a Tsallis conditional entropy:
\begin{align}
\tilde{S}^T_\alpha(\cN)&=\frac{|B|^{1-\alpha}-1}{1-\alpha}-|B|^{1-\alpha}\sup_\psi\tilde{D}^T_\alpha(\rho_{BR}\|\pi_B\otimes\rho_R)\\
&=-\sup_\psi \tilde{D}^T_\alpha(\rho_{BR}\|I_B\otimes \rho_R)\\
&=\inf_\psi \tilde{T}_\alpha^\downarrow(B|R)_{\cN(\psi)}\label{eq:Tsallis-entropy-conditional}
\end{align}
Here $\rho_{BR}(\psi)=\cN\otimes I \ket{\psi}\bra{\psi}$, and we used that $\tilde{D}^T_\alpha(\rho\|c\sigma)=\frac{c^{1-\alpha}-1}{\alpha-1}+c^{1-\alpha}\tilde{D}^T_\alpha(\rho\|\sigma)$.

Similarly to the channel entropy (\ref{def:entropy-channel}) and R\'enyi channel entropy (\ref{def:entropy-channel-Renyi}), the Tsallis channel entropy is monotone under the uniformity preserving superchannels and it is normalized. 

(Normalization) By definition, the entropy of a completely randomizing channel $\cR$ is $\tilde{S}^T_\alpha(\cR)=\frac{|B|^{1-\alpha}-1}{1-\alpha}$. And the entropy of the replacer channel is $\tilde{S}_\alpha^T(\Phi_\sigma)=S_\alpha^T(\sigma)$. Therefore, for a replacer channel that replaces any state with a pure state, the entropy of this channel is zero,  i.e. $\tilde{S}_\alpha^T(\Phi_\phi)=0$ for $\Phi_\phi(\rho)=\ket{\phi}\bra{\phi}$ for some fixed pure state $\ket{\phi}$.

From the bound on the conditional entropy (\ref{eq:T-tilde-down-bound}), the Tsallis channel entropy is also bounded. The upper bound is reached for the completely randomizing channel, and the lower bound is reached for the identity channel.

\begin{theorem}\label{thm:Tsallis-bound} (Boundedness) Let $\alpha\in(0,2)$. The $\alpha$-Tsallis channel entropy is bounded
$$-\frac{|B|^{\alpha-1}-1}{\alpha-1}\leq\tilde{S}^T_\alpha(\cN)\leq \frac{|B|^{1-\alpha}-1}{1-\alpha}\ .$$
\end{theorem}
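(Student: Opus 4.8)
The plan is to reduce the boundedness of the Tsallis channel entropy directly to the boundedness of the conditional Tsallis entropy already established in (\ref{eq:T-tilde-down-bound}). The essential tool is the variational formula (\ref{eq:Tsallis-entropy-conditional}), which rewrites $\tilde{S}^T_\alpha(\cN)$ as an infimum over pure input states $\psi$ of the conditional entropy $\tilde{T}_\alpha^\downarrow(B|R)_{\cN\otimes I\ket{\psi}\bra{\psi}}$. In this expression the \emph{output} system $B$ is the system whose dimension governs the conditional-entropy bound, while the reference system $R$ appears only as the conditioning system. Consequently, for each fixed pure state $\psi$ the bound (\ref{eq:T-tilde-down-bound}) applies verbatim with $|A|$ replaced by $|B|$, yielding
$$-\frac{|B|^{\alpha-1}-1}{\alpha-1}\leq \tilde{T}_\alpha^\downarrow(B|R)_{\cN\otimes I\ket{\psi}\bra{\psi}}\leq \frac{|B|^{1-\alpha}-1}{1-\alpha}\ .$$

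The second step is simply to pass to the infimum over all pure states $\psi$. Since every term in the infimum is at least $-\frac{|B|^{\alpha-1}-1}{\alpha-1}$, the infimum inherits this same lower bound; and since every term is at most $\frac{|B|^{1-\alpha}-1}{1-\alpha}$, the infimum is also at most this value. Combining the two observations gives exactly the two-sided estimate claimed for $\tilde{S}^T_\alpha(\cN)$, which completes the proof.

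There is no substantial obstacle here: once the variational identity (\ref{eq:Tsallis-entropy-conditional}) and the conditional bound (\ref{eq:T-tilde-down-bound}) are in hand, the statement follows immediately, because a two-sided bound that holds uniformly over a family of quantities is preserved under taking the infimum. The only point deserving a moment of care is verifying that it is the output dimension $|B|$—rather than the dimension of the auxiliary reference system $R$—that enters the bound; this is guaranteed precisely because in $\tilde{T}_\alpha^\downarrow(B|R)$ the system $R$ plays the role of the conditioning system, so its dimension does not appear.
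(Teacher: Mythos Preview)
Your proposal is correct and follows essentially the same approach as the paper: the paper states just before the theorem that the bound follows directly from the conditional-entropy bound (\ref{eq:T-tilde-down-bound}) together with the variational expression (\ref{eq:Tsallis-entropy-conditional}), and you have simply written this out in full detail.
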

Note that the Tsallis entropy $S^T_\alpha$  is upper bounded with a bound that is $\alpha$-dependent (\ref{eq:entropy-rel-T}), resulting in different lower and upper bound on the Tsallis channel entropy.

Tsallis relative entropy is pseudo-additive:
$$\tilde{D}_\alpha^T(\rho_1\otimes\rho_2\|\sigma_1\otimes\sigma_2)=\tilde{D}_\alpha^T(\rho_1\|\sigma_1)+\tilde{D}_\alpha^T(\rho_2\|\sigma_2)+(\alpha-1)\tilde{D}_\alpha^T(\rho_1\|\sigma_1)\tilde{D}_\alpha^T(\rho_2\|\sigma_2)\ . $$
Therefore, the channel entropy is pseudo-additive.

\begin{theorem}\label{thm:Tsallis-additive} (Pseudo-additivity)
Let $\alpha>1$, and let $\cN_{A_1\rightarrow B_1}$ and $\cM_{A_2\rightarrow B_2}$ be two channels. Then
$$\tilde{S}_\alpha^T(\cN\otimes\cM)=\tilde{S}_\alpha^T(\cN)+\tilde{S}_\alpha^T(\cM)+(1-\alpha)\tilde{S}_\alpha^T(\cN)\tilde{S}_\alpha^T(\cM)\ .$$
\end{theorem}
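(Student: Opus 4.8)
The plan is to recast the additive-looking identity as a genuine \emph{multiplicativity} statement for an auxiliary ``escort'' quantity, and then to feed in additivity of the R\'enyi channel divergence. First I would record that the completely randomizing channel factorizes, $\cR_{A_1A_2\rightarrow B_1B_2}=\cR_{A_1\rightarrow B_1}\otimes\cR_{A_2\rightarrow B_2}=:\cR_1\otimes\cR_2$, since $\pi_{B_1B_2}=\pi_{B_1}\otimes\pi_{B_2}$. Next I would set $G(\cN):=1+(1-\alpha)\tilde{S}^T_\alpha(\cN)$ and observe that the claimed pseudo-additivity is \emph{equivalent} to $G(\cN\otimes\cM)=G(\cN)G(\cM)$: expanding the product and dividing by the nonzero factor $(1-\alpha)$ reproduces exactly the three terms on the right-hand side of the theorem. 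Using the definition (\ref{def:entropy-channel-Tsallis}), a one-line computation gives $G(\cN)=|B|^{1-\alpha}Q(\cN)$, where $Q(\cN):=1+(\alpha-1)\tilde{D}^T_\alpha(\cN\|\cR)$, so the whole statement reduces to the multiplicativity $Q(\cN\otimes\cM)=Q(\cN)Q(\cM)$ together with the elementary $|B_1B_2|^{1-\alpha}=|B_1|^{1-\alpha}|B_2|^{1-\alpha}$.

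To handle $Q$, I would pass to the R\'enyi divergence. At the level of states, with $\gamma=\frac{1-\alpha}{2\alpha}$, the definitions yield the pointwise identity $1+(\alpha-1)\tilde{D}^T_\alpha(\rho\|\sigma)=\Tr\{(\sigma^\gamma\rho\sigma^\gamma)^\alpha\}=2^{(\alpha-1)\tilde{D}_\alpha(\rho\|\sigma)}$, i.e. $\tilde{D}^T_\alpha=g(\tilde{D}_\alpha)$ with $g(x)=\frac{1}{\alpha-1}\bigl(2^{(\alpha-1)x}-1\bigr)$. Since $g'(x)=\ln 2\cdot 2^{(\alpha-1)x}>0$ for every $\alpha\neq 1$, the map $g$ is continuous and strictly increasing regardless of whether $\alpha<1$ or $\alpha>1$. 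Because both channel divergences are suprema over input states of the corresponding state divergences, and a continuous increasing function commutes with the supremum (the optimum being attained on the compact set of pure inputs with $\dim R=\dim A$), this lifts verbatim to channels: $\tilde{D}^T_\alpha(\cN\|\cR)=g\bigl(\tilde{D}_\alpha(\cN\|\cR)\bigr)$, equivalently $Q(\cN)=2^{(\alpha-1)\tilde{D}_\alpha(\cN\|\cR)}$.

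Finally I would invoke additivity of the R\'enyi channel divergence. The already-recorded additivity of the R\'enyi channel entropy $\tilde{S}_\alpha$ \cite{GW21}, combined with $\log|B_1B_2|=\log|B_1|+\log|B_2|$ and the factorization of $\cR$, is equivalent to $\tilde{D}_\alpha(\cN\otimes\cM\|\cR)=\tilde{D}_\alpha(\cN\|\cR_1)+\tilde{D}_\alpha(\cM\|\cR_2)$. Substituting into the previous display gives $Q(\cN\otimes\cM)=2^{(\alpha-1)\tilde{D}_\alpha(\cN\otimes\cM\|\cR)}=Q(\cN)Q(\cM)$, and hence $G(\cN\otimes\cM)=|B_1B_2|^{1-\alpha}Q(\cN)Q(\cM)=G(\cN)G(\cM)$, which is the theorem. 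The main obstacle is exactly this lifting/additivity step: the easy half of $Q(\cN\otimes\cM)=Q(\cN)Q(\cM)$ comes for free from product inputs $\psi_1\otimes\psi_2$, for which the trace functional factorizes, but the reverse inequality---that inputs entangled across the two channels cannot help---is the real content, and it is precisely what additivity of the R\'enyi (equivalently Tsallis) channel divergence supplies. If one wished to avoid citing channel additivity, this reverse inequality would have to be established directly, which is the genuinely nontrivial part; everything else is routine algebra.
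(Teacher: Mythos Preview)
Your proof is correct. Both your argument and the paper's ultimately reduce the statement to the additivity of the sandwiched R\'enyi channel divergence $\tilde{D}_\alpha(\cN\otimes\cM\|\cR_1\otimes\cR_2)=\tilde{D}_\alpha(\cN\|\cR_1)+\tilde{D}_\alpha(\cM\|\cR_2)$ established in \cite{GW21}, but the packaging differs. The paper first rewrites the claim as pseudo-additivity of $\tilde{D}^T_\alpha(\cdot\|\cR)$, gets the easy direction from product inputs, and for the hard direction essentially re-derives the key inequality $\frac{1}{\alpha-1}X(\cN(\rho)\|\cR_1(\sigma))\le \frac{1}{\alpha-1}X(\cN(\rho)\|\cR_1(\rho))\cdot X(\rho\|\sigma)$ (with $X(\rho\|\sigma)=\Tr\{(\sigma^\gamma\rho\sigma^\gamma)^\alpha\}$) from \cite{GW21,wilde2020amortized}, then translates back to $\tilde{D}^T_\alpha$. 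Your route is more conceptual: you observe the pointwise identity $\tilde{D}^T_\alpha=g(\tilde{D}_\alpha)$ with $g$ continuous and strictly increasing, note that such a $g$ commutes with the supremum defining the channel divergence (the optimizer lives in a compact set, so the sup is attained), and thereby transport the full additivity of $\tilde{D}_\alpha$ directly to multiplicativity of $Q(\cN)=2^{(\alpha-1)\tilde{D}_\alpha(\cN\|\cR)}$, without unpacking the \cite{GW21} proof. What you gain is brevity and a clean reason why the Tsallis and R\'enyi channel quantities share the same optimizer; what the paper's version gains is that it isolates exactly which inequality from \cite{GW21} is doing the work, making the argument more self-contained.
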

\begin{proof}

For completely randomizing channels $\cR_1=\cR_{A_1\rightarrow B_1}$ and $\cR_2=\cR_{A_2\rightarrow B_2}$, we have
$$
\tilde{S}_\alpha^T(\cN\otimes\cM)=\frac{|B_1|^{1-\alpha}|B_2|^{1-\alpha}-1}{1-\alpha}-|B_1|^{1-\alpha}|B_2|^{1-\alpha}\tilde{D}^T_\alpha(\cN\otimes\cM\|\cR_1\otimes\cR_2)\ .$$
Therefore, the equality follows if 
$$\tilde{D}^T_\alpha(\cN\otimes\cM\|\cR_1\otimes\cR_2)=\tilde{D}^T_\alpha(\cN\|\cR_1)+\tilde{D}^T_\alpha(\cM\|\cR_2)+(\alpha-1)\tilde{D}^T_\alpha(\cN\|\cR_1)\tilde{D}^T_\alpha(\cM\|\cR_2) \ . $$
The $"\geq"$ inequality follows directly from the definition of the relative entropy between channels and the pseudo-additivity of the relative entropy between states. The $"\leq"$ inequality follows from the proof of additivity for the R\'enyi entropy \cite{GW21}. We adapt this argument to the Tsallis case by using the relation between $\tilde{D}_\alpha^T$ and $\tilde{D}_\alpha$.

Let $\psi_{RA_1A_2}$ be an arbitrary pure state. Define $\rho_{A_1R'}=\cM(\psi_{RA_1A_2})$ and $\sigma_{A_1R'}=\cR_2(\psi_{RA_1A_2})$, where $R'=B_2R$. Then $\cN\otimes\cM(\psi_{RA_1A_2})=\cN(\rho_{A_1R'})$ and $\cR_1\otimes\cR_2(\psi_{RA_1A_2})=\cR_1(\sigma_{A_1R'})$. Thus,
\begin{align}
\tilde{D}_\alpha^T(\cN\otimes\cM(\psi_{RA_1A_2})\|\cR_1\otimes\cR_2(\psi_{RA_1A_2}))&=\tilde{D}_\alpha^T(\cN(\rho_{A_1R'})\|\cR_1(\sigma_{A_1R'}))=\frac{1}{\alpha-1}\Tr\{(\cR_1(\sigma)^\gamma\cN(\rho)\cR_1(\sigma)^\gamma)^\alpha \}-\frac{1}{\alpha-1}\ . 
\end{align}

Note that the R\'enyi channel entropy is additive for $\alpha>1$, as was discussed in Proposition 15 in  \cite{GW21}. The proof of the additivity relies on the inequality presented in the proof of Proposition 41 in \cite{wilde2020amortized}:
\begin{align}
\tilde{D}_\alpha(\cN(\rho_{A_1R'})\|\cR_1(\sigma_{A_1R'}))\leq \tilde{D}_\alpha(\cN\|\cR_1)+\tilde{D}_\alpha(\rho_{A_1R'}\|\sigma_{A_1R'})\ .
\end{align}
This inequality is equivalent to
\begin{align}\label{eq:X}
\frac{1}{\alpha-1}\log X(\cN(\rho_{A_1R'})\|\cR_1(\sigma_{A_1R'}))\leq \sup_{\xi_{A_1\tilde{R}}} \frac{1}{\alpha-1}\log X(\cN(\xi_{A_1\tilde{R}})\|\cR_1(\xi_{A_1\tilde{R}}))+\frac{1}{\alpha-1}\log X(\rho_{A_1R'}\|\sigma_{A_1R'})\ ,
\end{align}
where $X(\rho\|\sigma)=\Tr\{(\sigma^\gamma\rho \sigma^\gamma)^\alpha \}$, therefore $\tilde{D}_\alpha(\rho\|\sigma)=\frac{1}{\alpha-1}\log X(\rho\|\sigma)$ and $\tilde{D}^T_\alpha(\rho\|\sigma)=\frac{1}{\alpha-1}\bigg( X(\rho\|\sigma)-1\bigg)$. Now, (\ref{eq:X}) is equivalent to
\begin{align}
\frac{1}{\alpha-1} X(\cN(\rho_{A_1R'})\|\cR_1(\sigma_{A_1R'}))\leq \sup_{\xi_{A_1\tilde{R}}}\frac{1}{\alpha-1} X(\cN(\xi_{A_1\tilde{R}})\|\cR_1(\xi_{A_1\tilde{R}}))\cdot X(\rho_{A_1R'}\|\sigma_{A_1R'})\ .
\end{align}

Therefore, applying this result to the Tsallis relative entropy, we have
\begin{align}
\tilde{D}_\alpha^T(\cN\otimes\cM(\psi_{RA_1A_2})\|\cR_1\otimes\cR_2(\psi_{RA_1A_2}))&=\tilde{D}_\alpha^T(\cN(\rho_{A_1R'})\|\cR_1(\sigma_{A_1R'}))\\
&=\frac{1}{\alpha-1}X(\cN(\rho_{A_1R'})\|\cR_1(\sigma_{A_1R'}))\ -\frac{1}{\alpha-1}\\
&\leq \sup_{\xi_{A_1\tilde{R}}}\frac{1}{\alpha-1} X(\cN(\xi_{A_1\tilde{R}})\|\cR_1(\xi_{A_1\tilde{R}}))\cdot X(\rho_{A_1R'}\|\sigma_{A_1R'})-\frac{1}{\alpha-1}\\
&= \sup_{\xi_{A_1\tilde{R}}}\frac{1}{\alpha-1} X(\cN(\xi_{A_1\tilde{R}})\|\cR_1(\xi_{A_1\tilde{R}}))\cdot X(\cM(\psi_{RA_1A_2})\|\cR_2(\psi_{RA_1A_2}))-\frac{1}{\alpha-1}\\
&=\sup_{\xi_{A_1\tilde{R}}} \tilde{D}^T_\alpha(\cN(\xi_{A_1\tilde{R}})\|\cR_1(\xi_{A_1\tilde{R}}))+\tilde{D}^T_\alpha(\cM(\psi_{RA_1A_2})\|\cR_2(\psi_{RA_1A_2}))\\
&\ \ +(\alpha-1)\sup_{\xi_{A_1\tilde{R}}}\tilde{D}^T_\alpha(\cN(\xi_{A_1\tilde{R}})\|\cR_1(\xi_{A_1\tilde{R}}))\cdot\tilde{D}^T_\alpha(\cM(\psi_{RA_1A_2})\|\cR_2(\psi_{RA_1A_2}))\\
&= \tilde{D}^T_\alpha(\cN\|\cR_1)+\tilde{D}^T_\alpha(\cM(\psi_{RA_1A_2})\|\cR_2(\psi_{RA_1A_2}))\\
&\ \ +(\alpha-1)\tilde{D}^T_\alpha(\cN\|\cR_1)\tilde{D}^T_\alpha(\cM(\psi_{RA_1A_2})\|\cR_2(\psi_{RA_1A_2})) \ .
\end{align}
Taking supremum over all states $\psi_{RA_1A_2}$ on both sides, we reach the necessary inequality.
\end{proof}

\begin{theorem}\label{thm:Tsallis-continuity-entropy} (Continuity of the sandwiched Tsallis channel entropy) Let $\cN_{A\rightarrow B}$ and $\cM_{A\rightarrow B}$ be two channels from $A$ to $B$ such that $\frac{1}{2}\|\cN-\cM\|_\diamond\leq\epsilon$. Then
$$|\tilde{S}^T_\alpha(\cN)-\tilde{S}^T_\alpha(\cM)|\leq f^T_{\alpha, |B|}(\epsilon)\ . $$
Here 
$$f^T_{\alpha, d}(\epsilon)=\begin{cases}
 \frac{1}{1-\alpha} \Bigg((1+\epsilon^\alpha) (1+\epsilon)^{1-\alpha}-1\Bigg)d^{1-\alpha}\ , \ \ \ \alpha\in[\frac{1}{2}, 1)\\
 \frac{1}{\alpha-1} \Bigg(((1+\epsilon)^{\alpha-1}-1)d^{\alpha-1}+\epsilon(1+\epsilon)^{\alpha-1}d^{1-\alpha}\Bigg)\ , \ \ \ \alpha\in(1,2) \ .
 \end{cases}
$$
\end{theorem}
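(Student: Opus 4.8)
The plan is to mirror the proof of Theorem \ref{thm:cont-R-entropy} almost verbatim, replacing the sandwiched R\'enyi conditional continuity bounds by their Tsallis counterparts. The starting point is the variational identity (\ref{eq:Tsallis-entropy-conditional}), which expresses $\tilde{S}^T_\alpha(\cN)=\inf_\psi \tilde{T}^\downarrow_\alpha(B|R)_{\cN(\psi)}$ as an infimum over pure states $\ket{\psi}_{AR}$ of a conditional sandwiched Tsallis entropy. I would assume without loss of generality that $\tilde{S}^T_\alpha(\cN)\geq\tilde{S}^T_\alpha(\cM)$. Since $\tilde{S}^T_\alpha(\cM)$ is an infimum, for each $\delta>0$ there is a pure state $\ket{\omega}_{AR}$ with $\tilde{T}^\downarrow_\alpha(B|R)_{\cM(\omega)}<\tilde{S}^T_\alpha(\cM)+\delta$, while the same infimum for $\cN$ gives $\tilde{S}^T_\alpha(\cN)\leq\tilde{T}^\downarrow_\alpha(B|R)_{\cN(\omega)}$. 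Subtracting yields
$$\tilde{S}^T_\alpha(\cN)-\tilde{S}^T_\alpha(\cM)<\tilde{T}^\downarrow_\alpha(B|R)_{\cN(\omega)}-\tilde{T}^\downarrow_\alpha(B|R)_{\cM(\omega)}+\delta\ .$$

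Two observations then reduce the right-hand side to the already-proved conditional continuity bounds. First, both $\cN$ and $\cM$ are trace-preserving, so the marginals on the conditioning system $R$ coincide: $\Tr_B(\cN\otimes I\ket{\omega}\bra{\omega})=\omega_R=\Tr_B(\cM\otimes I\ket{\omega}\bra{\omega})$, which is exactly the equal-marginal hypothesis demanded by Theorems \ref{thm:Tsallis-cont} and \ref{thm:Tsallis-cont-greater}. Second, the diamond distance controls the output trace distance,
$$\tfrac{1}{2}\|\cN\otimes I\ket{\omega}\bra{\omega}-\cM\otimes I\ket{\omega}\bra{\omega}\|_1\leq\tfrac{1}{2}\|\cN-\cM\|_\diamond\leq\epsilon\ .$$
Applying Theorem \ref{thm:Tsallis-cont} for $\alpha\in[\tfrac{1}{2},1)$ or Theorem \ref{thm:Tsallis-cont-greater} for $\alpha\in(1,2)$ to the states $\cN(\omega)$ and $\cM(\omega)$ on systems $BR$—where the output system $B$ now plays the role of the unconditioned system and $R$ is the conditioning system—bounds the conditional-entropy difference by $f^T_{\alpha,|B|}$ evaluated at the actual trace distance; since $f^T_{\alpha,|B|}$ is increasing in $\epsilon$, this is at most $f^T_{\alpha,|B|}(\epsilon)$. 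Letting $\delta\to0$ gives the claim, and the same argument with the roles of $\cN,\cM$ swapped yields the reverse inequality.

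I do not expect a genuine obstacle here: the entire analytic content has been front-loaded into the conditional continuity inequalities (Theorems \ref{thm:Tsallis-cont} and \ref{thm:Tsallis-cont-greater}) and into the variational identity (\ref{eq:Tsallis-entropy-conditional}). The only points requiring care are bookkeeping the roles of the systems—so that the dimension entering the bound is $|B|$ and not $|R|$ or $|A|$—and the remark that trace-preservation forces the equal-marginal condition, which is precisely what makes the equal-marginal conditional bounds applicable in the channel setting.
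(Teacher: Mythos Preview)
Your proposal is correct and follows essentially the same approach as the paper: reduce to the infimum expression (\ref{eq:Tsallis-entropy-conditional}), argue exactly as in Theorem~\ref{thm:cont} via a $\delta$-approximating state, observe that trace-preservation forces equal marginals on $R$, and invoke Theorems~\ref{thm:Tsallis-cont} and~\ref{thm:Tsallis-cont-greater}. Your explicit remark that $f^T_{\alpha,|B|}$ is monotone in $\epsilon$ is a small but useful clarification, since those theorems are stated with $\tfrac{1}{2}\|\rho-\sigma\|_1=\epsilon$ rather than $\leq\epsilon$.
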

\begin{proof} Because of the expression of the channel entropy in terms of the conditional entropy (\ref{eq:Tsallis-entropy-conditional}), the proof follows the same line of argument as the proof of Theorem \ref{thm:cont}. Since the marginals are the same $\Tr_B(\cN_{A\rightarrow B}\otimes I_R \ket{\omega}\bra{\omega}_{AR})=\Tr_B(\cM_{A\rightarrow B}\otimes I_R \ket{\omega}\bra{\omega}_{AR})$, we use Theorems \ref{thm:Tsallis-cont} and \ref{thm:Tsallis-cont-greater} to complete the proof.

\end{proof}

\section{Conclusion}

We proved uniform continuity bounds for the sandwiched R\'enyi and Tsallis conditional entropies  $\tilde{H}_\alpha^\downarrow, \tilde{T}_\alpha^\downarrow$ for states with the same marginal on the conditioning system. The bound depends only on the dimension of the conditioning system, except for the bound of $\tilde{H}_\alpha^\downarrow$ for $\alpha>1$, where the bound is independent of any dimension. We applied these bounds to show that the R\'enyi and Tsallis channel entropies defined through the corresponding sandwiched entropies are continuous with respect to the diamond distance on the channels. Note that we did not consider channel entropies defined through the regular (non-sandwiched) relative entropies, as it is not clear whether these channel entropies are additive. However, it would be of a separate mathematical interest to derive continuity inequalities for the non-sandwiched conditional entropies.

Also note, that looking at the definitions of channel entropies (\ref{def:entropy-channel}), (\ref{def:entropy-channel-Renyi}), (\ref{def:entropy-channel-Tsallis}), it is clear that generalizing channel entropy by using a generalized divergence $D(\cdot\|\cdot)$ to produce a meaningful definition of a channel entropy one must have $S_D(\cN)=f(D(\cN\|\cR)),$ where $D(\rho\| c\sigma)=-f(D(\rho\|\sigma)).$

\vspace{0.3in}
\textbf{Acknowledgments.}  A. V. is supported by NSF grant DMS-2105583.
\vspace{0.3in}

\bibliography{Bibliography-Vershynina}{}

@article{alicki2004continuity,
  title={Continuity of quantum conditional information},
  author={Alicki, Robert and Fannes, Mark},
  journal={Journal of Physics A: Mathematical and General},
  volume={37},
  number={5},
  pages={L55},
  year={2004},
  publisher={IOP Publishing}
}

@article{araki1990inequality,
  title={On an inequality of {Lieb} and {Thirring}},
  author={Araki, Huzihiro},
  journal={Letters in Mathematical Physics},
  volume={19},
  number={2},
  pages={167--170},
  year={1990}
}

@article{arimoto1977information,
  title={Information measures and capacity of order $\alpha$ for discrete memoryless channels},
  author={Arimoto, Suguru},
  journal={Topics in information theory},
  year={1977},
  publisher={The Netherlands}
}

@article{audenaert2007sharp,
  title={A sharp continuity estimate for the von {Neumann} entropy},
  author={Audenaert, Koenraad MR},
  journal={Journal of Physics A: Mathematical and Theoretical},
  volume={40},
  number={28},
  pages={8127},
  year={2007},
  publisher={IOP Publishing}
}

@article{AE11,
  author = {K. Audenaert and J. Eisert},
  title = {Continuity bounds on the quantum relative entropy -II},
  year = {2011},
  journal = {J Math Phys},
  pages = {112201},
  volume = {52},
  number={11}
}

@article{audenaert2024continuity,
  title={Continuity bounds for quantum entropies arising from a fundamental entropic inequality},
  author={Audenaert, Koenraad and Bergh, Bjarne and Datta, Nilanjana and Jabbour, Michael G and Capel, {\'A}ngela and Gondolf, Paul},
  journal={arXiv preprint arXiv:2408.15306},
  year={2024}
}

@article{beigi2013sandwiched,
  title={Sandwiched {R{\'e}nyi} divergence satisfies data processing inequality},
  author={Beigi, Salman},
  journal={Journal of Mathematical Physics},
  volume={54},
  number={12},
  year={2013},
  publisher={AIP Publishing}
}

@article{berta2025continuity,
  title={Continuity of entropies via integral representations},
  author={Berta, Mario and Lami, Ludovico and Tomamichel, Marco},
  journal={IEEE Transactions on Information Theory},
  year={2025},
  publisher={IEEE}
}

@article{ChDAP08,
  author = {Chiribella, G. and D'Ariano, G. M. and Perinotti, P.},
  year = {2008},
  title = {Transforming quantum operations: Quantum supermaps},
  journal = {Europhysics Letters},
  volume = {83},
  number = {3},
  pages = {30004},
}

@article{CMM18,
  author = {Cooney, T. and Mosonyi, M. and Wilde, M. M.},
  year = {2016},
  title = {Strong converse exponents for a quantum channel discrimination problem and quantum-feedback-assisted communication},
  journal = {Communications in Mathematical Physics},
  volume = {344},
  number = {3},
  pages = {797-829},
}

@article{Fan73,
  author = {M. Fannes},
  title = {A continuity property of the entropy density for spin lattice systems},
  year = {1973},
  journal = {Commun. Math. Phys.},
  volume = {31},
  pages = {291-294},
}

@article{frank2013monotonicity,
  title={Monotonicity of a relative {R{\'e}nyi} entropy},
  author={Frank, Rupert L and Lieb, Elliott H},
  journal={Journal of Mathematical Physics},
  volume={54},
  number={12},
  year={2013},
  publisher={AIP Publishing}
}

@article{gallager1979source,
  title={Source coding with side information and universal coding},
  author={Gallager, Robert G},
   journal={Proc. IEEE ISIT},
 volume={21},
  year={1979}
}

@article{G19,
  author = {Gour, G.},
  year = {2019},
  title = {Comparison of quantum channels by superchannels},
  journal = {IEEE Transactions on Information Theory},
  volume = {65},
  number = {9},
  pages = {5880-5904},
}

@article{GW21,
  author = {Gour, G. and Wilde, M. M.},
  year = {2021},
  title = {Entropy of a quantum channel},
  journal = {Physical Review Research},
  volume = {3},
  number = {2},
  pages = {023096},
}

@book{hayashi2017quantum,
  title={Quantum information theory},
  author={Hayashi, Masahito},
  year={2017},
  publisher={Springer}
}

@article{hayashi2014large,
  title={Large deviation analysis for quantum security via smoothing of {R{\'e}nyi} entropy of order 2},
  author={Hayashi, Masahito},
  journal={IEEE Transactions on Information Theory},
  volume={60},
  number={10},
  pages={6702--6732},
  year={2014},
  publisher={IEEE}
}

@article{hayashi2013security,
  title={Security analysis of $\varepsilon$-almost dual universal2 hash functions},
  author={Hayashi, Masahito},
  journal={arXiv preprint arXiv:1309.1596},
  year={2013},
  publisher={Citeseer}
}

@article{hiai1994equality,
  title={Equality cases in matrix norm inequalities of {Golden}-{Thompson} type},
  author={Hiai, Fumio},
  journal={Linear and Multilinear Algebra},
  volume={36},
  number={4},
  pages={239--249},
  year={1994},
  publisher={Taylor \& Francis}
}

@article{LKDW18,
  author = {Leditzky, F. and Kaur, E. and Datta, N. and Wilde, M. M.},
  year = {2018},
  title = {Approaches for approximate additivity of the {Holevo} information of quantum channels},
  journal = {Physical Review A},
  volume = {97},
  number = {1},
  pages = {012332},
}

@article{leditzky2017data,
  title={Data processing for the sandwiched {R{\'e}nyi} divergence: a condition for equality},
  author={Leditzky, Felix and Rouz{\'e}, Cambyse and Datta, Nilanjana},
  journal={Letters in Mathematical Physics},
  volume={107},
  pages={61--80},
  year={2017},
  publisher={Springer}
}

@article{LeuS09,
  author = {Leung, D. and Smith, G.},
  year = {2009},
  title = {Continuity of quantum channel capacities},
  journal = {Communications in Mathematical Physics},
  volume = {292},
  number = {1},
  pages = {201-215},
}

@article{lieb2001inequalities,
  title={Inequalities for the moments of the eigenvalues of the {Schr{\"o}dinger} {Hamiltonian} and their relation to {Sobolev} inequalities},
  author={Lieb, Elliott H and Thirring, Walter E},
  journal={The Stability of Matter: From Atoms to Stars: Selecta of Elliott H. Lieb},
  pages={205--239},
  year={2001},
  publisher={Springer}
}

@article{marwah2022uniform,
  title={Uniform continuity bound for sandwiched {R{\'e}nyi} conditional entropy},
  author={Marwah, Ashutosh and Dupuis, Fr{\'e}d{\'e}ric},
  journal={Journal of Mathematical Physics},
  volume={63},
  number={5},
  year={2022},
  publisher={AIP Publishing}
}

@article{mccarthycp,
  title={$C_p$},
  author={McCarthy, Charles A},
  journal={Israel Journal of Mathematics},
  volume={5},
  number={4},
  year={1967},
  publisher={SpringerNature Link}
}

@article{muller2013quantum,
  title={On quantum {R{\'e}nyi} entropies: A new generalization and some properties},
  author={M{\"u}ller-Lennert, Martin and Dupuis, Fr{\'e}d{\'e}ric and Szehr, Oleg and Fehr, Serge and Tomamichel, Marco},
  journal={Journal of Mathematical Physics},
  volume={54},
  number={12},
  year={2013},
  publisher={AIP Publishing}
}

@article{nielsen2000continuity,
  title={Continuity bounds for entanglement},
  author={Nielsen, Michael A},
  journal={Physical review a},
  volume={61},
  number={6},
  pages={064301},
  year={2000},
  publisher={APS}
}

@book{P08,
  author = {D. Petz},
  title = {Quantum information theory and quantum statistics},
  year = {2008},
  publisher = {53},
  address = {Springer, Berlin, Heilderberg},
}

@Inbook{Rotfeld1969,
author="Rotfel'd, S. Yu.",
editor="Birman, M. Sh.",
title="The Singular Numbers of the Sum of Completely Continuous Operators",
bookTitle="Spectral Theory",
year="1969",
publisher="Springer US",
address="Boston, MA",
pages="73--78",
isbn="978-1-4684-7589-0",
doi="10.1007/978-1-4684-7589-0_5",
url="https://doi.org/10.1007/978-1-4684-7589-0_5"
}

@article{Sir17,
  author = {Shirokov, M. E.},
  year = {2017},
  title = {Tight uniform continuity bounds for the quantum conditional mutual information, for the {Holevo} quantity, and for capacities of quantum channels},
  journal = {Journal of Mathematical Physics},
  volume = {58},
  number = {10},
  pages = {102202},
}

@article{tomamichel2009fully,
  title={A fully quantum asymptotic equipartition property},
  author={Tomamichel, Marco and Colbeck, Roger and Renner, Renato},
  journal={IEEE Transactions on information theory},
  volume={55},
  number={12},
  pages={5840--5847},
  year={2009},
  publisher={IEEE}
}

@article{tomamichel2014relating,
  title={Relating different quantum generalizations of the conditional {R{\'e}nyi} entropy},
  author={Tomamichel, Marco and Berta, Mario and Hayashi, Masahito},
  journal={Journal of Mathematical Physics},
  volume={55},
  number={8},
  year={2014},
  publisher={AIP Publishing}
}

@article{U62,
  author = {H. Umegaki},
  title = {Conditional Expectation in an Operator Algebra},
  year = {1962},
  journal = {IV. Entropy and Information, Kodai Math. Sem. Rep.},
  pages = {2},
  volume = {14},
}

@book{Wat18,
  author = {Watrous, J.},
  year = {2018},
  title = {The theory of quantum information},
  publisher = {university press},
  address = {Cambridge},
}

@article{wilde2020amortized,
  title={Amortized channel divergence for asymptotic quantum channel discrimination},
  author={Wilde, Mark M and Berta, Mario and Hirche, Christoph and Kaur, Eneet},
  journal={Letters in Mathematical Physics},
  volume={110},
  number={8},
  pages={2277--2336},
  year={2020},
  publisher={Springer}
}

@article{winter2016tight,
  title={Tight uniform continuity bounds for quantum entropies: conditional entropy, relative entropy distance and energy constraints},
  author={Winter, Andreas},
  journal={Communications in Mathematical Physics},
  volume={347},
  pages={291--313},
  year={2016},
  publisher={Springer}
}

@inproceedings{yagi2012finite,
  title={Finite blocklength bounds for multiple access channels with correlated sources},
  author={Yagi, Hideki},
  booktitle={2012 International Symposium on Information Theory and its Applications},
  pages={377--381},
  year={2012},
  organization={IEEE}
}

@unpublished{Y18,
  note = {arXiv preprint},
  archiveprefix = {arXiv},
  eprint = {1807.05958},
  author = {Yuan, X.},
  year = {2018},
  title = {Relative entropies of quantum channels with applications in resource theory},
}

@article{zhu2017coherence,
  title={Coherence and entanglement measures based on {R{\'e}nyi} relative entropies},
  author={Zhu, Huangjun and Hayashi, Masahito and Chen, Lin},
  journal={Journal of Physics A: Mathematical and Theoretical},
  volume={50},
  number={47},
  pages={475303},
  year={2017},
  publisher={IOP Publishing}
}
\bibliographystyle{plain}

\end{document}